\newtheorem{lemma}{Lemma}
\newtheorem{theorem}{Theorem}
\newtheorem{corollary}{Corollary}
\newtheorem{definition}{Definition}
\newtheorem{proposition}{Proposition}
\newtheorem{assumption}{Assumption}
\newtheorem{condition}{Condition}
\newcommand{\argmin}{\operatornamewithlimits{argmin}}
\newcommand{\argmax}{\operatornamewithlimits{argmax}}
\newcommand{\bR}{\mathbb{R}}
\newcommand{\mathsc}[1]{{\normalfont\textsc{#1}}}
\newcommand{\Eq}{\mathsc{Eq}}
\newcommand{\LW}{\mathsc{LW}}
\newcommand{\hk}{\hat{k}}
\newcommand{\kmin}{\mathcal{K}_{min}} 
\newcommand{\kmax}{\mathcal{K}_{max}} 
\newcommand{\C}[3]{\mathcal{C}_#1(#2, T_2, #3)}
\title{Risk-Averse and Optimistic Advertiser Incentive Compatibility in Auto-bidding}
\author{%
  Christopher Liaw\\
  Google\\
  Mountain View, California, USA \\
  \texttt{cvliaw@google.com} \\
  \And
  Wennan Zhu\\
  Google\\
  Mountain View, California, USA \\
  \texttt{wennanzhu@google.com} \\
}
\begin{document}

\maketitle

\begin{abstract}
The rise of auto-bidding in online advertising has created new challenges for ensuring advertiser incentive compatibility, particularly when advertisers delegate bidding to agents with high-level constraints. One challenge in defining incentive compatibility is the multiplicity of equilibria. After advertisers submit their reports, it is unclear what the result will be and one only has knowledge of a range of possible results. Nevertheless, \citet{alimohammadi2023incentive} proposed a notion of Auto-bidding Incentive Compatibility (AIC) which serves to highlight that standard auctions may not incentivize truthful reporting of these constraints. However, their definition of AIC is very stringent as it requires that the worst-case outcome of an advertiser's truthful report is at least as good as the best-case outcome of any of the advertiser's possible deviations. Indeed, they show that both First-Price Auction (FPA) and Second-Price Auction (SPA) are not AIC. Moreover, the AIC definition precludes having ordinal preferences on the possible constraints that the advertiser can report. 

In this paper, we introduce two refined and relaxed concepts: Risk-Averse Auto-bidding Incentive Compatibility (RAIC) and Optimistic Auto-bidding Incentive Compatibility (OAIC). RAIC (OAIC) stipulates that truthful reporting is preferred if its least (most) favorable equilibrium outcome is no worse than the least (most) favorable equilibrium outcome from any misreport. This distinction allows for a clearer modeling of ordinal preferences for advertisers with differing attitudes towards equilibrium uncertainty. We demonstrate that SPA satisfies both RAIC and OAIC. Furthermore, we show that SPA also meets these conditions for two advertisers when they are assumed to employ uniform bidding strategies. These findings provide new insights into the incentive properties of SPA in auto-bidding environments, particularly when considering advertisers' perspectives on equilibrium selection.
\end{abstract}

\section{Introduction}

Online advertising increasingly relies on auto-bidding, a method that empowers advertisers to define high-level objectives and constraints, thereby avoiding the manual bid adjustments for each keyword. For example, an advertiser might aim for maximal conversion volume, contingent upon meeting a target cost per acquisition (tCPA) requirement and staying within a specified budget. Other objectives, like maximizing the quantity of clicks, or alternative constraints, such as return-on-spend (ROS), are also common. Auto-bidding agents are designed to solve these optimization problems for advertisers, which not only saves considerable time and effort but can also yield superior advertising outcomes.

From the auto-bidders' perspective, the primary objective is to identify optimal bidding strategies that maximize a specific goal while adhering to advertiser constraints \cite{aggarwal2019autobidding, balseiro2015repeated}. From the auction design standpoint, research can be categorized into several areas: the existence and computational complexity of equilibria \cite{aggarwal2019autobidding, balseiro2015repeated, conitzer2022multiplicative, li2024vulnerabilities}, allocation efficiency \cite{aggarwal2019autobidding,deng2021towards, mehta2022auction,liaw2023efficiency,liaw2024efficiency}, and auction design to maximize revenue \cite{golrezaei2021auction, balseiro2021landscape, balseiro2024optimal}. A comprehensive overview of auto-bidding in online advertising can be found in the survey by Aggarwal et al. \cite{aggarwal2024auto}. 

Recent studies \cite{alimohammadi2023incentive, feng2024strategic, li2024vulnerabilities, wang2024ic} have begun to explore the incentives for advertisers to truthfully report their constraints to auto-bidders. Alimohammadi et al. \cite{alimohammadi2023incentive} model the auto-bidding problem as a two-stage game.  In the first stage, advertisers communicate their constraints to the auto-bidders. Subsequently, the auto-bidders place bids and achieve a subgame equilibrium. They introduce the concept of auto-bidding incentive compatibility (AIC), defining a mechanism as AIC if any auto-bidding equilibrium resulting from advertisers reporting their actual constraints is at least as favorable as any equilibrium achieved when they report different constraints. Under this stringent definition, they show that both First-Price Auction (FPA) and Second-Price Auction (SPA) fail to be AIC in scenarios involving both tCPA and budget constraints. 

We aim to establish a method for defining an advertiser's ordinal preferences across various reporting constraints and to assert incentive compatibility, which we define as: a mechanism where an advertiser never strictly prefers reporting constraints other than their actual constraints. It is important to note that the AIC definition presented in \citet{alimohammadi2023incentive} is inherently incompatible with ordinal preferences because it does not satisfy transitivity when multiple equilibria exist. 

\subsection{Our contributions}
\label{subsec:contributions}


This paper introduces two new concepts of auto-bidding incentive compatibility. First, we say a mechanism is risk-averse autobidding incentive compatible (RAIC) if the least favorable outcome under truthful reporting of their constraints is at least as good as the least favorable outcome under any misreporting of their constraints.
We similarly say a mechanism is optimistic auto-bidding incentive compatible (OAIC) if the most favorable outcome under truthful reporting is at least as good as the most favorable outcome under any misreport.

The difference between our definition of RAIC and OAIC with the original definition of AIC proposed in \cite{alimohammadi2023incentive} is on equilibrium selection from the set of all possible equilibria with reported targets. Consider the setting with tCPA constraints for example. An advertiser is AIC if the worst equilibrium when they report their true target is not worse than the best equilibrium when they report a different target. For RAIC and OAIC, either the best or the worst equilibrium is chosen for both the advertiser's true target or an alternative reported target. Thus, we can define advertiser's ordinal preferences over different reporting targets.

Our main results are summarized in Table~\ref{table:results}. We study tCPA advertiser incentive compatibility
under a second-price auction where there are a set of queries and a single winner for each query.
Our first result in Section~\ref{sec:main} assumes that bidders may place completely arbitrary and unrelated bids on every query and show that SPA is both RAIC and OAIC.
However, it is known that uniform bidding, where a bidder places a fixed multiple of their value in every query, is an optimal bidding function.
This further constrains the bidder's actions and makes it more technically challenging to understand incentive compatibility. With the assumption that the advertisers bid uniformly, we show that SPA for two advertisers is both RAIC and OAIC.

\begin{table}
  \caption{Summary of main results}
  \label{table:results}
  \centering
  \begin{tabular}{lll}
    \toprule
    & non-uniform bidding     & uniform bidding \\
    \midrule
    2 advertisers & RAIC (Thm~\ref{thm:non-uniform-RAIC}), OAIC (Thm~\ref{thm:non-uniform-OAIC}) & RAIC (Thm~\ref{thm:uniform_RAIC}), OAIC (Thm~\ref{thm:uniform_OAIC})     \\
    $>2$ advertisers & RAIC (Thm~\ref{thm:non-uniform-RAIC}), OAIC (Thm~\ref{thm:non-uniform-OAIC}) & -     \\
    \bottomrule
  \end{tabular}
\end{table}

\paragraph{Our techniques.}
In non-uniform bidding scenarios, we demonstrate that SPA are both RAIC and OAIC by constructing bids that form an equilibrium when one advertiser reports a lower target. The construction ensures the misreporting advertiser wins only a subset of queries, all bidders meet their tCPA constraints, and no bidder can win additional queries without violating these constraints.



The setting where we assume that bidders bid uniformly is much more technically challenging.
Here, we study incentive compatibility when there are two advertisers.
As a first step, we establish a characterization of all possible equilibria.
In particular, we begin by observing that if we sort the queries in decreasing order of the ratios of the values between advertiser $1$ and $2$ then every equilibrium always assigns a prefix of the queries to bidder $1$ and the remainder to bidder $2$ (see Eq.~\eqref{eq:ds_order} for details).
This allows us to establish a total ordering on the possible equilibria $N_0 \prec N_1 \prec N_2 \prec \ldots$.
This ordering corresponds to the advertiser's preference of outcomes (so, in particular, $N_0$ is the least preferred outcome for advertiser $1$, which we can think of as an equilibrium where advertiser $1$ wins nothing).
Note that not all such equilibria may exist.

Next, we derive exact conditions that are equivalent to the existence of the equilibrium $N_k$ for the reports provided by the advertisers. Given auto-bidders' uniform bid multipliers, it is straightforward to verify if the multipliers form an equilibrium $N_k$. However, it is non-trivial to calculate the set of feasible bid multipliers that forms an equilibrium $N_k$. In Theorem~\ref{thm:NE_condition}, we show an equilibrium $N_k$ exists if and only if a set of inequalities are satisfied. Note that these inequalities only depend on advertisers' reported targets and their value for the queries. Thus, we simplify the task of finding bid multipliers that form an equilibrium to checking a set of inequalities. 

We now proceed as follows.
First we assume that advertiser $1$ specifies a report $T_1$ while the report $T_2$ of advertiser $2$ is fixed.
Let $N_k$ be an equilibrium that exists and recall that the fact that $N_k$ exists means we have a set of inequalities which are satisfied.
Now consider a deviation of advertiser $1$ to a lower report $T_1' < T_1$.
If the inequalities establishing the existence of $N_k$ are satisfied then we are done as this trivially establishes that the new report has an equilibrium which is no better than the original report.
If not, then our goal is to show that for some $k' < k$, we have that $N_{k'}$ exists.
Our main technical argument is the following.
Suppose that $N_{k'}$ does not exist.
Then it must be that there is some inequality which is not satisfied.
However, we will show that (i) this implies $k' \ge 1$ and (ii) that the same inequality is satisfied for $N_{k'-1}$.
Therefore, by iterating all possible $k' \le k$, we must end up at some $N_{k'}$ where the desired inequalities are satisfied. If no equilibrium $N_{k'}$ exists for any $1 \le k' \le k$, then we show $N_0$ must be an equilibrium.
\section{Related works}

The autobidding problem was initially formalized by \citet{aggarwal2019autobidding}.
In their paper, they show that uniform bidding is a near-optimal bidding function in deterministic, truthful auctions.
In the mechanism design literature, the \emph{price of anarchy (PoA)} is a standard measure of efficiency which measures the ratio of the welfare (sum of advertisers' values) in an equilibrium compared with the optimal allocation.
\citet{aggarwal2019autobidding} further show 
that the PoA is 2 for SPA with ROS and budget constraints. 
There have since been a sequence of works that study the efficiency of other mechanisms.
\citet{mehta2022auction} shows that randomization helps by designing a randomized, truthful mechanism that has a PoA strictly better than $2$ for two bidders.
\citet{liaw2023efficiency} built upon and shows that using a randomized \emph{non}-truthful mechanism further improves the PoA.
The aforementioned two works assume bidders only specify an ROS constraint.
\citet{liaw2024efficiency} prove PoA results when autobidders have both an ROS and a budget constraint and study PoA results when the optimal allocation can or cannot be randomized.
\citet{liu2023auto} also looks at bidding and efficiency in autobidding with both an ROS and a budget constraint.
\citet{deng2021towards, balseiro2021robust} study how one can incorporate machine-learned signals into the auction and specifically how to improve equilibrium efficiency using such signals.
Recently, there has also been work on understanding interdependency in autobidding \cite{banchio2025autobidding}.

\citet{alimohammadi2023incentive} introduces the concept of AIC and shows that advertisers often have an incentive to strategically misrepresent their high-level objectives to their autobidding agents to achieve better outcomes in FPA and SPA. The existence of multiple equilibria in games with autobidding agents presents a significant challenge for defining and analyzing incentive compatibility. Alimohammadi et al.~\cite{alimohammadi2023incentive} address this by comparing worst-case truthful outcomes to best-case deviation outcomes. Our work further refines this by considering consistent equilibrium selection (worst-worst for RAIC, best-best for OAIC), allowing for a more direct ordinal preference modeling for advertisers with specific attitudes towards equilibrium ambiguity.

A related line of work looks at auction design where bidders may submit both values and constraints into the auction, particularly a ROS constraint \cite{balseiro2021landscape, balseiro2022optimal,balseiro2024optimal}.
However, these works assume that the constraints go directly into the mechanism.
On the other hand, we assume that a bidder receives the constraints and it is only the bidder that directly interacts with the mechanism.

\citet{feng2024strategic} study PoA when advertisers are utility maximizers and strategically report their budgets. \citet{li2024vulnerabilities} study computation complexity to find an approximate auto-bidding equilibrium, as well as the optimal revenue or welfare equilibria. \citet{wang2024ic} studied the multiple rounds setting where each round is a single query, multiple slots auction. They proposed an AIC mechanism Coupled First-Price Auction (CFP) and a TIC (Time-Invariant Incentive Compatibility) mechanism Decoupled First-Price Auction (DFP). Last but not least, there is work studying the game that advertisers strategically submit constraints across multiple channels \cite{deng2023multi, aggarwal2024platform}.






\section{Preliminaries}
Let $A$ be a set of advertisers and $Q$ be a set of $n$ queries. For each advertiser, there is an auto-bidder that bids on behalf of the advertiser. Each bidder $i \in A$ has value $v_{i,j}$ for query $j \in Q$. We let $b_{i, j}$ denote bidder $i$'s bid on query $j$.
A single-slot auction is defined via an allocation function $\pi \colon \bR_{+}^A \to \{0, 1\}^A$, where $\sum_{i \in A} \pi_i(b) = 1$ for all $b \in \bR^A_+$, and a cost function $c \colon \bR_+^A \to \bR_+^A$ which denotes the price paid by advertiser $i$ if they are allocated the slot.
In particular, the price paid by advertiser $i$ is $\pi_i(b) \cdot c_i(b)$.

\paragraph{Background on auto-bidding.}
Suppose all bidders except $i$ have fixed their bids.
Then the decision variables for bidder $i$ are $\{\pi_{i,j}\}_{j \in Q}$ where $\pi_{i,j}$ denotes whether bidder $i$ wins the query.
We let $c_{i,j}$ denote the cost that advertiser $i$ must pay for query $j$ when they win.

The goal of the auto-bidder is to optimize the advertiser's total value subject to a tCPA constraint where advertiser $i$'s total expected value is no less than their total expected spend.
More formally, the auto-bidding agent for advertiser $i$ aims to solve the following optimization problem:

\begin{align}
    \text{maximize:} & \sum\limits_{j \in Q}\pi_{i,j} v_{i,j}  \notag \\
    \text{subject to:} 
    & \sum\limits_{j \in Q}\pi_{i,j} c_{i, j} \leq T_i \sum\limits_{j \in Q}\pi_{i,j} v_{i,j} \label{eqn:tCPA}\tag{tCPA} \\
    &\forall j \in Q, \text{ } \pi_{i, j} \in [0, 1]. \notag
\end{align}
In the above optimization problem, $T_i$ represents bidder $i$'s tCPA constraint. With allocation $\pi$, $\forall i \in A$, we denote their total value as $\LW_i(\pi) = T_i \sum_{j \in Q} \pi_{i,j} v_{i,j}$. 






We adopt the assumption from previous work \cite{aggarwal2019autobidding, deng2021towards, liaw2023efficiency} that an auto-bidder $i$ bids at least $T_i \cdot v_{i,j}$ in a second price auction. This is because bidding $T_{i} \cdot v_{i,j}$ ``obviously weakly dominates'' any bid $b_{i,j} < T_{i} \cdot v_{i,j}$.\footnote{By obviously weakly dominates, we mean that the auto-bidder never gets less by bidding $T_i \cdot v_{i,j}$ than $b_{i,j} < T_i \cdot v_{i,j}$ but there are cases where bidding $T_i \cdot v_{i,j}$ obtains strictly more value than bidding $b_i$.}

\begin{assumption}
\label{assumption:undominated}
Undominated bids assumption: In a second price auction, $\forall i \in A, j \in Q$ with reported target $T_i$, we assume $b_{i,j} \ge T_i \cdot v_{i,j}$. 
\end{assumption}

\subsection{Auto-bidder equilibrium}
We say that the bids $\{b_{i, j}\}$ are in an auto-bidder equilibrium if the following two statements holds for each bidder $i$:
\begin{enumerate}
\item Bidder $i$ satisfies their tCPA constraint: $\sum_{j \in Q}\pi_{i,j} c_{i, j} \leq T_i \sum_{j \in Q}\pi_{i,j} v_{i,j}$.
\item Bidder $i$ is stable. Let $\pi$ and $c$ be the resulting allocation and costs of $\{b_{i, j}\}$. Suppose bidder $i$ deviates to bids $\{b_{i, j}'\}_{j \in Q}$, while other bidders remain their bids in $\{b_{i, j}\}$. Let $\pi'$ and $c'$ denote the allocation and costs after bidder $i$'s deviation.
Then either bidder $i$ does not gain more value, or bidder $i$ violates their constraint. Formally, at least one of the following two inequalities must hold:


\begin{itemize}
    \item $T_i \sum_{j \in Q}\pi_{i,j}' v_{i, j} \le T_i \sum_{j \in Q}\pi_{i,j} v_{i, j}$
    \item $\sum_{j \in Q}\pi_{i,j}' c_{i, j} > T_i \sum_{j \in Q}\pi_{i,j}' v_{i,j}$.
\end{itemize}

\end{enumerate}





Given an instance $S$ and a mechanism $\mathcal{M}$, let $\Pi^{\Eq}$ denote the set of allocations with $\mathcal{M}$ at equilibrium.
In the case of uniform bidding equilibrium, we constraint the bids $b_{i,j} = \alpha_i v_{i,j}$ for some $\alpha_i > 0$ and $b_{i,j}' = \alpha_i' v_{i,j}$ for some other $\alpha_i' > 0$.
The remainder of the definition of an equilibrium remains unchanged.

\subsection{Auto-bidding advertiser incentive compatibility}
We consider the following game. First, each advertiser reports a tCPA target to their own auto-bidder. Then, the auto-bidders bid and reach an auto-bidder equilibrium.

Assuming the advertisers are strategic, do they have incentive to report a tCPA target that is different from their true target? Consider advertiser $i \in A$ with target $T_i$. Suppose all other advertisers' reported targets are fixed. The auction mechanism $\mathcal{M}$ is also fixed. Let $\Pi(T_i)^{\Eq}$ denote the set of allocations with $\mathcal{M}$ at auto-bidder equilibria with $i$'s reported target being $T_i$. 

\begin{definition}
Risk-averse Auto-bidding Incentive Compatibility (RAIC): An auction rule is RAIC if for any Advertiser $i \in A$ with a tCPA target $T_i$ and reported target $T_i' \le T_i$:

\[
\min_{\pi \in \Pi(T_i)^{\Eq}} \LW_i(\pi) \geq \min_{\pi \in \Pi(T_i')^{\Eq}} \LW_i(\pi).  
\]
\end{definition}

\begin{definition}
Optimistic Auto-bidding Incentive Compatibility (OAIC): An auction rule is OAIC if for any Advertiser $i \in A$ with a tCPA target $T_i$ and reported target $T_i' \le T_i$:

\[
\max_{\pi \in \Pi(T_i)^{\Eq}} \LW_i(\pi) \geq \max_{\pi \in \Pi(T_i')^{\Eq}} \LW_i(\pi).  
\]
\end{definition}

Note that we restrict reported target to be at most the true target, because reporting a higher target could end up violating the advertiser's true target. 

\section{RAIC and OAIC in second price auction}
\label{sec:main}

\paragraph{Non-uniform bidding.} First, we study RAIC and OAIC for two advertisers in SPA with non-uniform bidding, i.e., there is no assumption that auto-bidders bid uniformly on each query. We show that SPA are both RAIC and OAIC for tCPA auto-bidders.

\begin{theorem}
\label{thm:non-uniform-RAIC}
SPA is RAIC in the setting of non-uniform bidding tCPA auto-bidders.
\end{theorem}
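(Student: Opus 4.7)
The plan is to prove the stronger claim: for every equilibrium $\pi^*$ at the truthful report $T_i$, with supporting bids $b^*$, there exists an equilibrium $\tilde\pi$ at any lower report $T_i' \le T_i$ with $\LW_i(\tilde\pi) \le \LW_i(\pi^*)$. Specializing $\pi^*$ to the $\LW_i$-minimizer then yields $\min_{\pi \in \Pi(T_i')^{\Eq}} \LW_i(\pi) \le \min_{\pi \in \Pi(T_i)^{\Eq}} \LW_i(\pi)$, which is exactly RAIC.

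Write $W^* = \{j : \pi^*_{i,j} = 1\}$ and $c^*_{i,j} = \max_{k \ne i} b^*_{k,j}$. I would split on whether $W^*$ already satisfies the stricter tCPA, i.e., whether $\sum_{j \in W^*} c^*_{i,j} \le T_i' \sum_{j \in W^*} v_{i,j}$. If yes, the same profile $b^*$ is itself an equilibrium at $T_i'$: advertiser $i$'s tCPA holds by assumption; $i$'s stability carries over because deviations infeasible under the looser $T_i$-tCPA remain infeasible under the tighter $T_i'$-tCPA and feasible non-improving deviations stay non-improving (values are independent of $T$); the other bidders see an unchanged environment. If not, $i$ must surrender queries, and I would keep $b^*_{-i}$ fixed while letting $i$ best-respond at $T_i'$ against these opponent bids, obtaining a winning set $\tilde W$ with $\sum_{j \in \tilde W} v_{i,j} \le \sum_{j \in W^*} v_{i,j}$, since $W^*$ was $i$'s optimum under $T_i$ and the $T_i'$-feasible set sits inside the $T_i$-feasible set. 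Advertiser $i$'s new bids are $b^*_{i,j}$ on $\tilde W$ and some value in $[T_i' v_{i,j}, c^*_{i,j})$ on $Q \setminus \tilde W$ -- a nonempty interval precisely when $i$ can legitimately lose $j$, the queries $j$ with $T_i' v_{i,j} > c^*_{i,j}$ being automatically forced into $\tilde W$ -- so that each forfeited query goes to its original runner-up.

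The main obstacle is verifying the equilibrium conditions for the other advertisers after reassignment. For each inherited query $j \in W^* \setminus \tilde W$, the new winner $k$ absorbs cost $b_{i,j}^{\mathrm{new}}$ and value $v_{k,j}$; I would set $b_{i,j}^{\mathrm{new}}$ as close to $c^*_{i,j}$ as allowed so that $k$'s tCPA is preserved via the slack inherent in the original equilibrium. For stability of each $k \ne i$, I would argue by contradiction: a profitable deviation for $k$ at the modified profile would also have been profitable at $b^*$, because $k$'s environment differs from $\pi^*$ only by $i$'s lowered bids on queries $i$ voluntarily relinquished, and lowered $i$-bids only relax $k$'s deviation space -- contradicting $\pi^*$ being an equilibrium. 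The argument extends from two to more advertisers by iterating the runner-up reassignment query-by-query. The most delicate step is handling the undominated-bid floor $b_{i,j} \ge T_i' v_{i,j}$: it restricts how low $i$ can drop a bid when forfeiting a query, and the heart of the verification is showing this floor is always compatible with keeping the new winner's tCPA satisfied.
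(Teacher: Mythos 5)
Your high-level plan---for each truthful equilibrium, exhibit an equilibrium at the lower report in which advertiser $i$ gets weakly less value---is the same as the paper's, and your Case 1 (the old profile stays an equilibrium when the old winning set already meets the tighter tCPA) and the comparison $\sum_{j\in\tilde W} v_{i,j}\le\sum_{j\in W^*}v_{i,j}$ are fine. The gap is in Case 2, and it is not a postponable detail: with $b^*_{-i}$ frozen, the reassigned profile need not be an equilibrium at all. The inheriting bidder $k$ pays $\max\{b^{\mathrm{new}}_{i,j},\max_{l\ne i,k} b^*_{l,j}\}\ge T_i' v_{i,j}$ on each forfeited query $j$, and nothing in the original equilibrium ties this price to $T_k v_{k,j}$: since $k$ was \emph{not} winning $j$, its bid $b^*_{k,j}$ (and third-party bids) may sit far above $T_k v_{k,j}$, and $k$'s stability in $\pi^*$ says, if anything, that winning more would violate its constraint---there is no ``slack inherent in the original equilibrium.'' Concretely, with two bidders take $b^*_{i,j}=10$, $b^*_{k,j}=6$, $T_i v_{i,j}=7$, $T_i' v_{i,j}=5$, $T_k v_{k,j}=3$, with $j$ bidder $i$'s only won query and $k$ having zero slack elsewhere: $i$ must and may legitimately drop $j$ (the interval $[5,6)$ is nonempty), but then $k$ pays at least $5>3$ and its tCPA breaks. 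In this instance the correct equilibrium at the lower report has $i$ \emph{retain} $j$ at the reduced price $T_k v_{k,j}=3$, which is unreachable while $k$'s bid is frozen at $6$; also note that pushing $b^{\mathrm{new}}_{i,j}$ ``as close to $c^*_{i,j}$ as allowed'' moves $k$'s cost in the wrong direction for $k$'s tCPA.

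Your stability argument for the bidders $l\ne i$ is likewise backwards. Lowering $i$'s bids on relinquished queries lowers the price at which a third bidder can grab them (it now only needs to beat $b^*_{k,j}$ rather than $b^*_{i,j}$), so the deviation space is enlarged at cheaper prices; a deviation that violated $l$'s tCPA at the old price can become feasible and value-improving at the new one, and the original equilibrium gives no protection against it, so the ``would also have been profitable at $b^*$'' contradiction does not go through. The paper's proof closes exactly these holes by \emph{not} freezing $b^*_{-i}$ on the queries bidder $1$ originally won: there, every non-winner's bid is reset to its floor $T_l' v_{l,j}$, the query is awarded to the bidder maximizing $T_l' v_{l,j}$ (bidder $1$ keeps it only when $T_1' v_{1,j}$ is largest), and the winner bids $+\infty$. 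This makes each newly won query individually tCPA-feasible (price at most the winner's target times value) and makes stability trivial (no one can outbid $+\infty$). Repairing your argument requires a similar modification of the other bidders' bids on the forfeited queries; as written, the claimed equilibrium may fail to exist.
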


\begin{proof}
Without loss of generality, we focus on advertiser $1$.
Let $T_i$ denote the reported target of each advertiser.
For $i \neq 1$, we assume advertiser $i$'s target is fixed.

Let $\pi = \argmin_{\pi \in \Pi(T_1)^{\Eq}} \LW_i(\pi)$, i.e. $\pi$ is the worst case equilibrium for advertiser 1 when they report their true target $T_1$. If there are multiple equilibria that have the same worst case welfare for advertiser 1, we take an arbitrary one as $\pi$. For any reported target $T_1' < T_1$, we show that there exists $\pi' \in \Pi(T_1')^{\Eq}$, such that $\LW_1(\pi') \le \LW_1(\pi)$, which indicates $\min_{\pi \in \Pi(T_1^{\Eq}} \LW_1(\pi) \geq \min_{\pi \in \Pi(T_1')^{\Eq}} \LW_1(\pi)$.
This implies that the auction mechanism is RAIC.

Since $\pi$ is an auto-bidder equilibrium, we know that all the bidders satisfy their tCPA constraints and that no bidder can increase its welfare from deviating from its current bids.
In addition, we assume all bidders satisfy the undominated bids assumption so $b_{i,j} \geq T_i v_{i,j}$ for all $i \in A$ and $j \in Q$.

We now construct bids $b'$ that form a new equilibrium $\pi'$ when advertiser $1$ reports $T_1' < T_1$ and $T_i' = T_i$ for $i \neq 1$.

For $j \in Q$, let $w(j)$ denote the winner that wins $j$ in $\pi$. For queries where bidder 1 loses the auction in $\pi$, we will keep bidder 1's bids and increase the original winner's bids, so that the allocation on these queries are the same in $\pi$ and $\pi'$.
Formally, for $j \in Q$ such that $w(j) \ne 1$, let $b_{w(j),j}' = +\infty$, and for $i \ne w(j)$, let $b_{i,j}' = b_{i,j}$. It is straightforward that $\pi_{i,j}' = \pi_{i,j}$ for all $i \in A$ for such queries. 

Now, fix query $j$ where bidder 1 did win the auction in $\pi$.
Let $w^*(j) = \argmax_{i \ne 1} T_i' v_{i, j}$.
We consider two cases.
\paragraph{Case 1: $T_1' v_{1,j} \ge T_{w^*(j)}' v_{w^*(j),j}$.}
In this case, let $b_{1,j}' = +\infty$ and for $i \ne 1$, set  $b_{i,j}' = T_i' v_{i,j}$. As a result, bidder 1 wins $j$ in $\pi'$.

\paragraph{Case 2: $T_1' v_{1,j} < T_{w^*(j)}' v_{w^*(j),j}$.}
In this case, let $b_{w^*(j),j}' = +\infty$.
For all $i \ne w^*(j)$, set $b_{i,j}' = T_i' v_{i,j}$. As a result, bidder $w^*(j)$ wins $j$ in $\pi'$.\\

Next, we show that $b_{i, j}'$ form an auto-bidder equilibrium where bidder $1$ reports $T_1' < T_1$ and bidder $i \neq 1$ reports $T_i' = T_i$.
First, we show that all the bids in $b_{i, j}'$ satisfy the undominated bids assumption. For queries $j \in Q$ where bidder $1$ was not winning in $\pi$, we have $b_{i,j}' \geq b_{i,j}$ and we know that $b_{i,j} \geq T_i v_{i,j} \geq T_i' v_{i,j}$.
For queries $j \in Q$ where bidder $1$ was winning in $\pi$, the undominated bids assumption is trivially satisfied by construction.
Second, observe that no bidder can win more queries than what they win in $\pi'$, because for the queries they lose, the winner bids $+\infty$. 

We now check that the tCPA constraints are satisfied.
Bidder 1 satisfies their tCPA constraint, because for all the queries they win in $\pi'$, the cost, or second-highest bid, is at most $T_1' v_{1,j}$. Similarly, all other bidders satisfies their tCPA constraint, because for all the queries they win in both $\pi$ and $\pi'$, we know the constraint is satisfied because $\pi$ forms an equilibrium with reported targets $(T_1, T_2, \dots, T_n)$.
Moreover, the cost for these bidders are the same in $\pi$ and $\pi'$. For the queries bidders other than bidder 1 win in $\pi'$ but not $\pi$ (Case 2 above), we know that the cost is at most $T_{w^*(j)}' v_{w^*(j),j}$, so the constraint remains satisfied. 

Finally, we show that $\LW_1(\pi') \le \LW_1(\pi)$.
This is simply because if bidder $1$ loses a query in $\pi$ then it continues to lose in $\pi'$ by construction.
\end{proof}

\begin{theorem}
\label{thm:non-uniform-OAIC}
SPA is OAIC in the setting of non-uniform bidding tCPA auto-bidders.
\end{theorem}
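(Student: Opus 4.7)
The plan is to mirror the proof of Theorem~\ref{thm:non-uniform-RAIC}, but with the roles of the truthful and misreport equilibria reversed. Fix advertiser 1 and, for $i \ne 1$, fix the reports $T_i$. Pick any $\pi' \in \argmax_{\pi \in \Pi(T_1')^{\Eq}} \LW_1(\pi)$, and let $W \subseteq Q$ be the set of queries bidder 1 wins in $\pi'$. It suffices to exhibit $\pi \in \Pi(T_1)^{\Eq}$ in which bidder 1 wins exactly $W$, since then $\LW_1(\pi) \ge \LW_1(\pi')$ follows directly.

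I would build the bid vector $b$ as the natural dual of the RAIC construction. For each $j \in W$, set $b_{1,j} = +\infty$ and $b_{i,j} = T_i v_{i,j}$ for $i \ne 1$. For each $j \notin W$, let $w(j) \ne 1$ be the winner in $\pi'$ and set $b_{w(j),j} = +\infty$, $b_{1,j} = T_1 v_{1,j}$, and $b_{i,j} = T_i v_{i,j}$ for the remaining bidders. All bids are undominated by inspection, and every query is won by whoever bids $+\infty$, so bidder 1 wins exactly $W$ in the induced allocation $\pi$ and each $i \ne 1$ wins exactly the queries they won in $\pi'$. Stability is automatic because no deviation can out-bid $+\infty$: no bidder can acquire an additional query, and discarding a currently-won query only decreases value.

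The only non-trivial step is the tCPA check. For each $j \in W$, the second-highest bid in $\pi$ is $\max_{i \ne 1} T_i v_{i,j}$, which is at most the second-highest bid $c'_{1,j}$ in $\pi'$ because the undominated bids assumption applied to $\pi'$ forces $b'_{i,j} \ge T_i v_{i,j}$. Summing and using that $\pi'$ satisfies the $T_1'$-tCPA gives $\sum_{j \in W} c_{1,j} \le \sum_{j \in W} c'_{1,j} \le T_1' \sum_{j \in W} v_{1,j} \le T_1 \sum_{j \in W} v_{1,j}$, so bidder 1 satisfies the $T_1$-tCPA. The same second-price comparison shows that each bidder $i \ne 1$ pays weakly less in $\pi$ than in $\pi'$ on every query they win, so their tCPA continues to hold at the same reported target $T_i$.

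I do not anticipate any serious obstacle: the OAIC direction is arguably easier than the RAIC direction, because moving from the lower reported target $T_1'$ to the truthful $T_1 \ge T_1'$ only relaxes bidder 1's tCPA while simultaneously reducing their per-query payment (since we rebid the other bidders down to the minimum-undominated $T_i v_{i,j}$). The one subtlety to keep in mind is being explicit that $\LW_1$ is the total-value quantity determined by the set of queries won (with the same $T_1$ factor on both sides of the inequality), so that matching allocations immediately yield $\LW_1(\pi) \ge \LW_1(\pi')$.
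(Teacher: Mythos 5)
There is a genuine gap in your construction, and it is precisely the difficulty the paper's Case 1/Case 2 split is designed to handle. On a query $j \notin W$, the undominated-bids assumption at the \emph{truthful} report forces bidder 1 to bid at least $T_1 v_{1,j}$, which is higher than the bid $b'_{1,j} \ge T_1' v_{1,j}$ bidder 1 may have placed in $\pi'$. Consequently the runner-up price faced by the old winner $w(j)$ in your $\pi$ is $\max\{T_1 v_{1,j}, \max_{i \ne 1, w(j)} T_i v_{i,j}\}$, which can strictly exceed what $w(j)$ paid in $\pi'$; your claim that ``each bidder $i \ne 1$ pays weakly less in $\pi$ than in $\pi'$'' is false. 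Concretely, take two bidders and one query with $v_{1,1} = 10$, $v_{2,1} = 1$, $T_1 = T_2 = 1$, $T_1' = 0.05$: there is an equilibrium under $T_1'$ in which bidder 2 wins (bidder 1 bids $0.5$, bidder 2 bids high and pays $0.5$), so $W = \emptyset$, but in your construction bidder 2 must pay at least $T_1 v_{1,1} = 10 > T_2 v_{2,1} = 1$, violating its tCPA. Moreover, the problem is not just with your particular bids: in this instance \emph{no} equilibrium under the truthful $T_1$ has bidder 1 winning exactly $W$, so your opening reduction ``it suffices to exhibit $\pi$ in which bidder 1 wins exactly $W$'' is itself unattainable; bidder 1 must be allowed to win a superset of $W$ (which is harmless for OAIC, since $\LW_1$ only increases).

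The paper's proof fixes exactly this point. For $j \notin W$ it sets $w^*(j) = \argmax_{i \ne 1} T_i v_{i,j}$ and compares $T_1 v_{1,j}$ with $T_{w^*(j)} v_{w^*(j),j}$: if $T_1 v_{1,j} \ge T_{w^*(j)} v_{w^*(j),j}$ the query is handed to bidder 1 (who pays at most $T_{w^*(j)} v_{w^*(j),j} \le T_1 v_{1,j}$, so its tCPA is preserved query-by-query), and otherwise it is handed to $w^*(j)$ (who pays at most $\max\{T_1 v_{1,j},\, \max_{i \ne 1, w^*(j)} T_i v_{i,j}\} \le T_{w^*(j)} v_{w^*(j),j}$). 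The rest of your argument---undominated bids, stability via the winner bidding $+\infty$, and bidder 1's tCPA on the queries in $W$ (its cost there is no larger than in $\pi'$ and $T_1 \ge T_1'$)---is sound and matches the paper; the missing ingredient is the reallocation of the lost queries whose forced price $T_1 v_{1,j}$ the old winner can no longer afford.
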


\begin{proof}
Without loss of generality, suppose advertiser 1's true target is $T_1$, any other Advertiser $i$'s reported target is $T_i$ and fixed. 

For any reported target $T_1' < T_1$, and $\forall i \ne 1$, $T_i' = T_i$. let $\pi' = \argmax_{\pi \in \Pi(T_1)^{\Eq}} \LW_i(\pi)$, i.e. $\pi'$ is the best case equilibrium for advertiser 1 when they report target $T_1'$. If there are multiple equilibria that have the same best case welfare for advertiser 1, we take an arbitrary one as $\pi'$. We show that when advertiser 1 reports their true target $T_1$, there exists $\pi \in \Pi(T_1)^{\Eq}$, such that $\LW_1(\pi') \le \LW_1(\pi)$, which indicates $\max_{\pi \in \Pi(T_1)^{\Eq}} \LW_1(\pi) \geq \max_{\pi \in \Pi(T_1')^{\Eq}} \LW_1(\pi)$, thus the auction mechanism is OAIC.

When advertiser 1 reports $T_1'$, in the auto-bidder equilibrium with allocation $\pi'$, suppose $\forall i \in A, j \in Q$ bidder i's bid is $b_{i,j}'$. By the definition of an auto-bidder equilibrium, we know that all the bidders satisfy their tCPA constraints, and no bidder could increase their welfare by deviating from their current bids. Also all the bidders satisfy the undominated bids assumption: $\forall i \in A, j \in Q, b_{i,j}' \ge T_i' v_{i, j}$. 

Now we construct bids $b$ that form an equilibrium $\pi$ when advertiser 1 reports $T_1 > T_1'$ and $\forall i \ne 1$, $T_i = T_i'$. 

$\forall j \in Q$, let $w(j)$ denote the winner that wins $j$ in $\pi'$. For queries that bidder 1 wins the auction in $\pi'$, we increase bidder 1's bids and keep all other bidders' bids in $\pi'$, which keeps the same allocation as in $\pi'$. Formally, $\forall j \in Q$ such that $w(j) = 1$, let $b_{1,j} = +\infty$, and for $i \ne 1$, let $b_{i,j} = b_{i,j}'$. It is straight-forward that $\forall i \in A, \pi_{i,j} = \pi_{i,j}'$ for these queries.  

For query $j$ that bidder 1 loses the auction in $\pi'$, let $w^*(j) = \argmax_{i \ne 1} T_i v_{i, j}$. we group them to two cases:
\paragraph{Case 1: $T_1 v_{1,j} \ge T_{w^*(j)} v_{w^*(j),j}$.}
In this case, let $b_{1,j} = +\infty$ and $\forall i \ne 1, b_{i,j} = T_i v_{i,j}$. As a result, bidder 1 wins $j$ in $\pi$.

\paragraph{Case 2: $T_1 v_{1,j} < T_{w^*(j)} v_{w^*(j),j}$.}
In this case, let $b_{w^*(j),j} = +\infty$, and $\forall i \ne w^*(j), b_{i,j} = T_i v_{i,j}$. As a result, bidder $w^*(j)$ wins $j$ in $\pi$. \\

Next, we show that $b_{i, j}$ form an auto-bidder equilibrium with reported targets $(T_1>T_1', T_2=T_2', \dots, T_n=T_n')$. First, all the bids in $b_{i, j}$ satisfy the undominated bids assumption. For bidders $i \ne 1$, $b_{i,j} \ge b_{i,j}'$. The bids satisfy the undominated bids assumption because bids $b_{i,j}$ satisfy the assumption and their targets are fixed. For advertiser 1, $b_{1,j} = +\infty$ or $b_{1,j} = T_1 v_{1,j}$, which also satisfy the undominated bids assumption in both cases. Second, no bidder could win more queries than what they win in $\pi$, because for the queries they lose, the winner bids $+\infty$. 

$\forall i \ne 1$, bidder $i$ satisfies their tCPA constraint, because the cost for query $j$ is less than $T_i v_{i,j}$ (Case 2 above). Bidder $1$ also satisfies their tCPA constraint: For all the queries that bidder $1$ wins in both $\pi'$ and $\pi$, the tCPA constraint is satisfied, because $\pi'$ forms an equilibrium with reported targets $(T_1'<T_1, T_2'=T_2, \dots, T_n'=T_n)$, the cost for bidder $1$ is the same in $\pi'$ and $\pi$, and $T_1 > T_1'$. For the queries that bidder $1$ loses in $\pi'$ but wins in $\pi$ (Case 1), we know that the cost $T_{w^*(j)} v_{w^*(j),j} < T_1 v_{1,j}$, so the constraint is also satisfied.

Finally, we show that $\LW_1(\pi') \le \LW_1(\pi)$: For any query bidder 1 wins in $\pi'$, they still wins in $\pi$, thus $\LW_1(\pi') \le \LW_1(\pi)$.

\end{proof}


\paragraph{Uniform bidding.} Next, and for the remainder of this section, we study RAIC and OAIC for two advertisers in SPA with uniform bidding. For each bidder $i \in A$ with a reported target $T_i$, we assume they have a uniform multiplier $\mu_i \ge 1$.
Given this multiplier, we assume that $b_{i, j} = \mu_i T_i v_{i, j}$.
Let us order the queries such that 

\begin{equation}
\label{eq:ds_order}
 \frac{v_{1,1}}{v_{2,1}} > \frac{v_{1,2}}{v_{2,2}} > \dots > \frac{v_{1,n-1}}{v_{2,n-1}} > \frac{v_{1,n}}{v_{2,n}}.  
\end{equation}

We assume there is no tie among the ratios $\frac{v_{1,j}}{v_{2,j}}$. If there exist ties in the original input, we can always merge tied queries first as pre-processing. This step is for analysis only.  

When $b_{1,j} = b_{2,j}$, we let bidder 1 win $j$. The tie breaking does not affect the conclusion in this paper. We include the proofs for tie breaking towards bidder 2 in the proofs for completeness. With bid multipliers $\mu_1$, $\mu_2$, there exists $k$, such that

\begin{equation}
\label{eq:ds_mu_order}
\frac{T_1 v_{1,0}}{T_2 v_{2,0}} >
\frac{T_1 v_{1,1}}{T_2 v_{2,1}} > \frac{T_1 v_{1,2}}{T_2 v_{2,2}} >
\dots > \frac{T_1 v_{1,k}}{T_2 v_{2,k}}
\ge \frac{\mu_2}{\mu_1}
> \frac{T_1 v_{1,k+1}}{T_2 v_{2,k+1}}
\dots > \frac{T_1 v_{1,n}}{T_2 v_{2,n}}
> \frac{T_1 v_{1,n+1}}{T_2 v_{2,n+1}},
\end{equation}
where we added two ``virtual'' queries to handle two corner cases $\frac{\mu_2}{\mu_1} > \frac{T_1 v_{1,1}}{T_2 v_{2,1}}$ and $\frac{\mu_2}{\mu_1} < \frac{T_1 v_{1,n}}{T_2 v_{2,n}}$. Let $v_{1,0} = +\infty$, $v_{2,0} = 1$, $v_{1,n+1} = 1$, $v_{2,n+1} = +\infty$.
Note that these are \emph{not} real queries and the bidder does not receive value for winning query $0$ or $n+1$.

When $\frac{\mu_2}{\mu_1}$ is in the range shown in Inequality \eqref{eq:ds_mu_order}, bidder 1 wins queries $\{1 \dots k\}$ (empty set when $k=0$), and bidder 2 wins queries $\{k+1 \dots n\}$ (empty set when $k=n$). 

We define this allocation as $N_k$: an allocation $\pi = N_k$ with $k \in [0, n]$ if:
\begin{align*}
    &\text{if } k = 0, &\forall j \in [1, n],  \quad &\pi_{1,j} = 0,  \pi_{2,j} = 1 && \\
    &\text{if } k = n, &\forall j \in [1, n],  \quad &\pi_{1,j} = 1,  \pi_{2,j} = 0 && \\
    &\text{if } k \in [1, n-1], &\forall 1 \le j \le k, \quad &\pi_{1,j} = 1,  \pi_{2,j} = 0 && \\
    &\text{if } k \in [1, n-1], &\forall k+1 \le j \le n, \quad &\pi_{1,j} = 0,  \pi_{2,j} = 1 &&
\end{align*}

In the uniform bidding setting, the allocation depends only on $\frac{\mu_2}{\mu_1}$, so any feasible allocation $\pi$ must be in $\{N_k \,:\, k \in [0, n]\}$. 

We first characterize the condition that an auto-bidder equilibrium exists with allocation $N_k$ for reported targets $T_1$, $T_2$. Given $T_1$, $T_2$, there exists equilibrium $\Eq$ with allocation $N_k$ if and only if there exists $\mu_1$, $\mu_2$, such that:

\begin{enumerate}
\item $\frac{\mu_2}{\mu_1}$ falls between $\frac{T_1 v_{1,k+1}}{T_2 v_{2,k+1}}$ and $\frac{T_1 v_{1,k}}{T_2 v_{2,k}}$.
\item Each bidder satisfies their tCPA constraint.
\item Each bidder would violate their tCPA constraint if they raise bids to win one extra query.
\item $\mu_1 \ge 1$, $\mu_2 \ge 1$.
\end{enumerate}

The following condition formalizes these constraints.
\begin{condition}
\label{cond:nk}
With advertiser reported targets $T_1$, $T_2$, an auto-bidder equilibrium $\Eq$ with allocation $N_k$ exists if and only if there exist $\mu_1$, $\mu_2$ such that the all of the following conditions hold:

\begin{align}
    \frac{T_1 v_{1,k+1}}{T_2 v_{2,k+1}} &< \frac{\mu_2}{\mu_1} \le \frac{T_1 v_{1,k}}{T_2 v_{2,k}} \tag{Bidder 1 wins $\{1, \dots, k\}$, Bidder 2 wins $\{k+1, \dots, n\}$} \\
    \mu_1 &\ge 1, \mu_2 \ge 1 \tag{Undominated bids}
\end{align}

\begin{minipage}[t]{0.51\textwidth}
\begin{align}
\mu_2 T_2 \sum_{j=1}^k v_{2,j} &\le T_1 \sum_{j=1}^k v_{1,j} \tag{bidder1 tCPA} \\
\mu_1 T_1 \sum_{j=k+1}^n v_{1,j} &\le T_2 \sum_{j=k+1}^n v_{2,j} \tag{bidder2 tCPA}
\end{align}
\end{minipage}
\begin{minipage}[t]{0.48\textwidth}
\begin{align}
\mu_2 T_2 \sum_{j=1}^{k+1} v_{2,j} &> T_1 \sum_{j=1}^{k+1} v_{1,j} \tag{bidder1 stable} \\
\mu_1 T_1 \sum_{j=k}^n v_{1,j} &> T_2 \sum_{j=k}^n v_{2,j} \tag{bidder2 stable}
\end{align}
\end{minipage}

\end{condition}

We first show a warm up proposition that only considers the constraints of one bidder. The proof of this proposition showcases the key idea behind the proofs for our main theorems.
Recall for RAIC and OAIC, we require two conditions to be satisfied: the tCPA constraint and the ``stableness'' constraint.
In the next proposition, we assume that bidder $2$ does not respond and only consider the response of bidder $1$ after advertiser $1$ adjusts their target.
The main proof will then need to also consider the response of bidder $2$, which will be more complex since they may each respond to each other until they reach an equilibrium.

\begin{proposition}
Consider two advertisers in SPA. Suppose with reported targets $(T_1, T_2)$, there exists bid multipliers $\mu_1$, $\mu_2$ to form allocation $N_k$ such that all the conditions in Condition~\ref{cond:nk} are satisfied. Then with reported targets $(T_1' < T_1, T_2)$, there must exists $\mu_1'$, such that when the bid multipliers are $(\mu_1', \mu_2)$, the allocation is $N_{k'}$ where $k' \le k$, and both Bidder 1's tCPA constraint and stableness constraint are satisfied.
\end{proposition}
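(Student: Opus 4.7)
The plan is to use the explicit inequalities of Condition~\ref{cond:nk} directly. Since $\mu_2$ is held fixed and the proposition only asks that bidder~$1$ be happy, only bidder~$1$'s tCPA and stable inequalities matter. For a candidate allocation $N_{k'}$ I would introduce the partial-sum gap
\[
g(k') \;=\; T_1' \sum_{j=1}^{k'} v_{1,j} \;-\; \mu_2 T_2 \sum_{j=1}^{k'} v_{2,j},
\]
so $g(0)=0$; then bidder~$1$'s tCPA constraint in Condition~\ref{cond:nk} is exactly $g(k')\ge 0$ and bidder~$1$'s stable constraint is $g(k'+1)<0$.

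The key definition: let $k'$ be the largest index in $\{0,1,\dots,k\}$ with $g(k')\ge 0$, which is well-defined since $g(0)=0$. To verify the stable inequality $g(k'+1)<0$: if $k'<k$ this is immediate from the maximality of $k'$; if $k'=k$, I would use that the original stable condition at $T_1$ reads $T_1\sum_{j=1}^{k+1}v_{1,j} - \mu_2 T_2 \sum_{j=1}^{k+1}v_{2,j}<0$, and replacing $T_1$ by the smaller $T_1'$ only decreases the left-hand side term by term, so $g(k+1)<0$ as well.

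Next I would exhibit an undominated multiplier $\mu_1'\ge 1$ placing $\mu_2/\mu_1'$ in the allocation interval of Condition~\ref{cond:nk}. Rearranging, I need
\[
\frac{\mu_2 T_2 v_{2,k'}}{T_1' v_{1,k'}} \;\le\; \mu_1' \;<\; \frac{\mu_2 T_2 v_{2,k'+1}}{T_1' v_{1,k'+1}},
\]
under the conventions $v_{1,0}=v_{2,n+1}=+\infty$. I would take $\mu_1' = \max\bigl(1,\; \mu_2 T_2 v_{2,k'}/(T_1' v_{1,k'})\bigr)$ and check three things: the upper bound exceeds $1$, because $g(k')\ge 0$ together with $g(k'+1)<0$ forces $T_1' v_{1,k'+1} < \mu_2 T_2 v_{2,k'+1}$; the lower bound lies strictly below the upper bound, by the query ordering~\eqref{eq:ds_order}; and $\mu_1'\ge 1$ holds by construction.

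The only delicate step is the stable verification at $k'=k$: this is where the hypothesis $T_1'<T_1$ is genuinely used, via the pointwise monotonicity of each per-query term $T_1 v_{1,j} - \mu_2 T_2 v_{2,j}$ in $T_1$. Everything else is bookkeeping with the inequalities already laid out in Condition~\ref{cond:nk}, and this is precisely the "one inequality at a time" structure that the introduction advertises will generalize, once bidder~$2$'s responses are re-introduced, to the full RAIC theorem.
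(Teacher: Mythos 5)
Your proposal is correct and takes essentially the same approach as the paper's proof: the key step in both is that violating bidder~$1$'s tCPA constraint at level $m$ is precisely bidder~$1$'s stableness condition at level $m-1$ (your choice of the largest $k'\le k$ with $g(k')\ge 0$ is exactly the endpoint of the paper's downward iteration, with the $T_1'<T_1$ monotonicity handling the $k'=k$ case). Your explicit construction of $\mu_1'=\max\bigl(1,\,\mu_2 T_2 v_{2,k'}/(T_1' v_{1,k'})\bigr)$ and the verification that it lies in the allocation interval is a more careful treatment of a step the paper only sketches.
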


\begin{proof}
    We first check the case when $\mu_1' = \mu_1$ and $k' = k$. Bidder 1's stableness constraint is still satisfied, because  $T_2 \sum_{j=1}^{k+1} v_{2,j} > T_1 \sum_{j=1}^{k+1} v_{1,j}$ implies  $T_2 \sum_{j=1}^{k+1} v_{2,j} > T_1' \sum_{j=1}^{k+1} v_{1,j}$ when $T_1' < T_1$.

    However, the bidder's tCPA constraint might not be satisfied anymore because their target decreases. In this case, we know that:
    \begin{equation}
        \label{eq:proposition}
        \mu_2 T_2 \sum_{j=1}^k v_{2,j} > T_1' \sum_{j=1}^k v_{1,j}
    \end{equation}
    Next, we find $\mu_1'$ such that ($\mu_1', \mu_2$) forms an allocation $N_{k-1}$. The stableness condition for Bidder 1 with $N_{k-1}$ is exactly the same as Inequality~\eqref{eq:proposition}, which means when $N_k$ breaks bidder 1's tCPA constraint, then $N_{k-1}$ must satisfy their stableness constraint. Indeed, being stable means they cannot win more queries without violating their tCPA constraint, so if they would violate their tCPA constraint at $N_k$, then they must be stable at $N_{k-1}$. With this argument, we iterate all $k' \in [0, k]$ decreasingly, until we find a $k'$ that satisfies both their tCPA and stable constraints. Note that if we get to $k'=0$ or $\mu_1' = 1$, their tCPA constraint must be satisfied, and they must also satisfy their stableness constraint because they must fail the tCPA constraint with $k'+1$ to reach $k'$.
\end{proof}

To simplify the notation, we define the set of queries $\{1, \dots, k\}$ to be $L_k$ and $\{k, \dots, n\}$ to be $R_k$. For advertiser $i$, we define their total value in $L_k$ and $R_k$ as: 

\begin{align*}
    V_i(L_k) = \sum_{j=1}^k v_{i,j} \quad\text{ and }\quad
    V_i(R_k) = \sum_{j=k}^n v_{i,j}.
\end{align*} 
We state and show Lemma~\ref{lemma:mediant} below by the mediant property. This lemma will be used in the proofs for Theorem~\ref{thm:NE_condition}, Theorem~\ref{thm:uniform_RAIC}, and Theorem~\ref{thm:uniform_OAIC}.
\begin{lemma}
\label{lemma:mediant} 
With reported targets $T_1, T_2$, the following inequalities hold for all $\hk \in Q$:
\begin{align*}
    \frac{v_{1,\hk}}{v_{2,\hk}} > \frac{V_1(R_{\hk+1})}{ V_2(R_{\hk+1})} \quad \text{ and } \quad
    \frac{V_1(L_{\hk})}{V_2(L_{\hk})} > \frac{v_{1,\hk+1}}{ v_{2,\hk+1}}.
\end{align*}
\end{lemma}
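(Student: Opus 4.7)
The plan is to prove both inequalities as direct consequences of the \emph{mediant inequality}: for positive reals $b_1, \dots, b_m$ and any real $r$, if $a_i/b_i > r$ for every $i$, then $(\sum_i a_i)/(\sum_i b_i) > r$; the reverse strict inequality holds if each $a_i/b_i < r$. This follows immediately by cross-multiplying each $a_i > r b_i$ (respectively $a_i < r b_i$) and summing. The only input from the setting of the paper will be the total ordering of the ratios $v_{1,j}/v_{2,j}$ imposed by Eq.~\eqref{eq:ds_order}.

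For the first inequality, I would fix $\hat{k}$ and observe that Eq.~\eqref{eq:ds_order} gives $v_{1,\hat{k}}/v_{2,\hat{k}} > v_{1,j}/v_{2,j}$ for every $j \in R_{\hat{k}+1} = \{\hat{k}+1, \dots, n\}$. Setting $a_j = v_{1,j}$, $b_j = v_{2,j}$, and $r = v_{1,\hat{k}}/v_{2,\hat{k}}$, the mediant inequality applied to the index set $R_{\hat{k}+1}$ yields
$\frac{V_1(R_{\hat{k}+1})}{V_2(R_{\hat{k}+1})} < \frac{v_{1,\hat{k}}}{v_{2,\hat{k}}},$
which is exactly the claim.

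For the second inequality, the same ordering gives $v_{1,j}/v_{2,j} > v_{1,\hat{k}+1}/v_{2,\hat{k}+1}$ for every $j \in L_{\hat{k}} = \{1, \dots, \hat{k}\}$. Applying the mediant inequality with $a_j = v_{1,j}$, $b_j = v_{2,j}$, and $r = v_{1,\hat{k}+1}/v_{2,\hat{k}+1}$ over $j \in L_{\hat{k}}$ gives $V_1(L_{\hat{k}})/V_2(L_{\hat{k}}) > v_{1,\hat{k}+1}/v_{2,\hat{k}+1}$, as desired.

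I do not expect any substantive obstacle: each direction reduces to the classical mediant property applied to the sorted ratios, so the proof should fit in a few lines. The only mild care concerns boundary cases in which the index set $R_{\hat{k}+1}$ or $L_{\hat{k}}$ is a singleton (where the mediant collapses to the individual ratio and the inequality reduces to the adjacent comparison already supplied by Eq.~\eqref{eq:ds_order}), or where the virtual queries of Eq.~\eqref{eq:ds_mu_order} extend the ordering to the extreme indices; both cases are handled transparently by the same argument.
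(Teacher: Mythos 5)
Your proof is correct and follows essentially the same route as the paper's: both apply the mediant inequality to the ratios $v_{1,j}/v_{2,j}$, sorted by Eq.~\eqref{eq:ds_order}, once over the index set $R_{\hk+1}$ and once over $L_{\hk}$. The only difference is cosmetic (you state the mediant property in its general form and mention boundary cases explicitly), so nothing further is needed.
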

\begin{proof}
Since $\frac{v_{1,\hk}}{v_{2,\hk}} > \frac{v_{1,j}} {v_{2,j}}$ for all $j \in [\hk+1, n]$, the mediant property implies that
\[
\frac{v_{1,\hk}}{v_{2,\hk}} > \frac{\sum_{j=\hk+1}^n v_{1,j}}{\sum_{j=\hk+1}^n v_{2,j}} =  \frac{V_1(R_{\hk+1})}{V_2(R_{\hk+1})}.
\]

Similarly, since $\frac{v_{1,\hk+1}}{v_{2,\hk+1}} < \frac{v_{1,j}}{v_{2,j}}$ for all $j \in [1, \hk]$, the mediant property gives
\[
\frac{V_1(L_{\hk})}{V_2(L_{\hk})} = \frac{\sum_{j=1}^{\hk} v_{1,j}}{\sum_{j=1}^{\hk} v_{2,j}} > \frac{v_{1,\hk+1}}{v_{2,\hk+1}}. \qedhere
\]
\end{proof}

Theorem~\ref{thm:NE_condition} below simplifies Condition~\ref{cond:nk} by removing the bid multipliers. This simplification is crucial for the proof of our main conclusion in Theorem~\ref{thm:uniform_RAIC} and ~\ref{thm:uniform_OAIC}.

\begin{theorem}
\label{thm:NE_condition}
With advertiser reported targets $T_1$, $T_2$, Condition~\ref{cond:nk} is equivalent to the following statement. An auto-bidder equilibrium $\Eq$ with allocation $N_k$ exists if and only if all of the following conditions hold.

\noindent
\begin{minipage}[t]{0.5\textwidth}
\begin{align}
\frac{T_1 V_1(L_{k+1})}{T_2 V_2(L_{k+1})}
< \frac{v_{1,k} \cdot V_2(R_{k+1})}{v_{2,k} \cdot V_1(R_{k+1})} \label{eq:NE_condition1} \\
\frac{T_1 V_1(R_k)}{T_2 V_2(R_k)}
> \frac{v_{1,k+1} \cdot V_2(L_k)}{v_{2,k+1} \cdot V_1(L_k)} \label{eq:NE_condition2}
\end{align}
\end{minipage}%
\begin{minipage}[t]{0.45\textwidth}
\begin{align}
\frac{T_2 V_2(R_{k+1})}{T_1 V_1(R_{k+1})} &\ge 1  \label{eq:mu1_ge_1}\\
\frac{T_1 V_1(L_k)}{T_2 V_2(L_k)} &\ge 1 \label{eq:mu2_ge_1}
\end{align}
\end{minipage}

\end{theorem}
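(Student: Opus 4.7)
The plan is to prove the equivalence in both directions by eliminating the multipliers $\mu_1, \mu_2$ from Condition~\ref{cond:nk} via suitable pairwise combinations of its inequalities, so that what remains is a system depending only on $T_1, T_2$ and the values.

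For the forward direction, conditions \eqref{eq:mu1_ge_1} and \eqref{eq:mu2_ge_1} drop out by chaining the undominated-bid lower bound $\mu_i \ge 1$ with the corresponding tCPA upper bound on each multiplier. To obtain \eqref{eq:NE_condition1} I combine the bidder-1 stableness inequality, which gives $\mu_2 > \frac{T_1 V_1(L_{k+1})}{T_2 V_2(L_{k+1})}$, with the bidder-2 tCPA inequality rewritten as $\frac{1}{\mu_1} \ge \frac{T_1 V_1(R_{k+1})}{T_2 V_2(R_{k+1})}$; multiplying these and then chaining with the allocation upper bound $\frac{\mu_2}{\mu_1} \le \frac{T_1 v_{1,k}}{T_2 v_{2,k}}$ yields, after cancelling one factor of $T_1/T_2$, precisely \eqref{eq:NE_condition1}. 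Inequality \eqref{eq:NE_condition2} follows by the symmetric combination of the bidder-1 tCPA, bidder-2 stableness, and allocation lower bound.

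For the reverse direction, the constraints involving $\mu_1$ alone confine it to $I_1 := (\max(1, \frac{T_2 V_2(R_k)}{T_1 V_1(R_k)}), \frac{T_2 V_2(R_{k+1})}{T_1 V_1(R_{k+1})}]$, and analogously $\mu_2$ to $I_2 := (\max(1, \frac{T_1 V_1(L_{k+1})}{T_2 V_2(L_{k+1})}), \frac{T_1 V_1(L_k)}{T_2 V_2(L_k)}]$. Both intervals are nonempty: when the $1$ in the max is inactive, the strict upper-versus-lower comparison follows from Lemma~\ref{lemma:mediant} applied to $R_k$ vs.\ $R_{k+1}$ and $L_k$ vs.\ $L_{k+1}$; when it is active, the requirement that the upper endpoint is at least $1$ is exactly \eqref{eq:mu1_ge_1} or \eqref{eq:mu2_ge_1}. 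Writing $L_i, U_i$ for the endpoints of $I_i$, the set of achievable ratios $\{\mu_2/\mu_1 : \mu_1 \in I_1, \mu_2 \in I_2\}$ is itself an interval with endpoints $L_2/U_1$ and $U_2/L_1$, so it suffices to show this interval overlaps the target range $(\frac{T_1 v_{1,k+1}}{T_2 v_{2,k+1}}, \frac{T_1 v_{1,k}}{T_2 v_{2,k}}]$; any point of overlap supplies a valid $(\mu_1,\mu_2)$.

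The main obstacle is verifying this overlap, which reduces to the two inequalities $L_2/U_1 < \frac{T_1 v_{1,k}}{T_2 v_{2,k}}$ and $\frac{T_1 v_{1,k+1}}{T_2 v_{2,k+1}} < U_2/L_1$, each complicated by the $\max(1,\cdot)$ inside the $L_i$. In the regime where the non-trivial expression dominates the $1$ in a given $L_i$, the corresponding overlap inequality reduces by cross-multiplication and cancellation of a $T_i$-factor to precisely \eqref{eq:NE_condition1} or \eqref{eq:NE_condition2}. In the regime where $1$ dominates, the same overlap inequality collapses to a value-only statement such as $\frac{v_{1,k}}{v_{2,k}} > \frac{V_1(R_{k+1})}{V_2(R_{k+1})}$ or $\frac{V_1(L_k)}{V_2(L_k)} > \frac{v_{1,k+1}}{v_{2,k+1}}$, which are exactly the content of Lemma~\ref{lemma:mediant}. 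Handling the four combinations of these two binary cases closes the construction and completes the equivalence.
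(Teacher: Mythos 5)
Your overall strategy is essentially the paper's: eliminate $\mu_1,\mu_2$ by reading off the feasible interval for each multiplier, reduce existence to an overlap question for the ratio $\mu_2/\mu_1$ against the allocation range, and invoke Lemma~\ref{lemma:mediant} to dispose of the residual value-only comparisons. Your forward direction (chaining bidder~1's stableness with bidder~2's tCPA and the allocation upper bound, and symmetrically) is correct and in fact a bit cleaner than the paper's derivation. The problem is in the ``if'' direction.

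The gap is in how you encode the undominated-bids constraint. You take $I_1=\bigl(\max\{1,\tfrac{T_2V_2(R_k)}{T_1V_1(R_k)}\},\ \tfrac{T_2V_2(R_{k+1})}{T_1V_1(R_{k+1})}\bigr]$ (and $I_2$ analogously), i.e.\ with an \emph{open} lower endpoint at $1$, whereas the actual constraint is $\mu_1\ge 1$: when $\tfrac{T_2V_2(R_k)}{T_1V_1(R_k)}<1$ the true feasible set is $\bigl[1,\tfrac{T_2V_2(R_{k+1})}{T_1V_1(R_{k+1})}\bigr]$, which contains $\mu_1=1$. Consequently your nonemptiness claim (``when the $1$ is active, the requirement that the upper endpoint is at least $1$ is exactly \eqref{eq:mu1_ge_1} or \eqref{eq:mu2_ge_1}'') fails at the boundary: if $\tfrac{T_2V_2(R_{k+1})}{T_1V_1(R_{k+1})}=1$, then \eqref{eq:mu1_ge_1} holds but your $I_1=(1,1]$ is empty, while the true feasible set forces $\mu_1=1$. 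A concrete instance: $n=2$, $k=1$, $v_{1,1}=4$, $v_{1,2}=1$, $v_{2,1}=v_{2,2}=1$, $T_1=T_2=1$. All four conditions of the theorem hold, and $N_1$ is indeed an equilibrium with $\mu_1=1$ and any $\mu_2\in(2.5,4]$; yet your construction yields an empty ratio set and produces no witness, so the ``if'' direction is not established. Repairing this requires using the correct half-closed feasible sets and then tracking which endpoints of the achievable ratio set are attained; once you do, overlap with the half-open target range $\bigl(\tfrac{T_1v_{1,k+1}}{T_2v_{2,k+1}},\tfrac{T_1v_{1,k}}{T_2v_{2,k}}\bigr]$ is no longer immediately equivalent to your two strict inequalities --- the degenerate configurations where the two ranges touch only at a closed endpoint have to be examined and ruled out via Lemma~\ref{lemma:mediant}, which is exactly what the paper does in Cases 2.1 and 2.2 of Lemma~\ref{lemma:NE_condition2}. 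Your statement that the achievable ratios form ``an interval with endpoints $L_2/U_1$ and $U_2/L_1$'' glosses over precisely these open/closed boundary issues.
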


\begin{proof}
Rewriting Condition~\ref{cond:nk} using $L_k$ and $R_k$, we have the following conditions.
\begin{align}
    \frac{T_1 v_{1,k+1}}{T_2 v_{2,k+1}} &< \frac{\mu_2}{\mu_1} \le \frac{T_1 v_{1,k}}{T_2 v_{2,k}} \label{eq:mu1mu2_range} \\
    \mu_2 T_2 V_2(L_k) &\le T_1 V_1(L_k) \label{eq:bidder1_tCPA} \\
    \mu_1 T_1 V_1(R_{k+1}) &\le T_2 V_2(R_{k+1}) \label{eq:bidder2_tCPA}\\
    \mu_2 T_2 V_2(L_{k+1}) &> T_1 V_1(L_{k+1})\label{eq:bidder1_stable}\\
    \mu_1 T_1 V_1(R_k) &> T_2 V_2(R_k) \label{eq:bidder2_stable} \\
    \mu_1 &\ge 1, \mu_2 \ge 1 \label{eq:undominated}
\end{align}

We first show the equivalent conditions for different subsets of Condition~\ref{cond:nk} and then combine them to finish the proof.

\begin{lemma}
\label{lemma:NE_condition1}
There exist $\mu_1, \mu_2$ to satisfy Inequality \eqref{eq:bidder1_tCPA}, \eqref{eq:bidder2_tCPA}, \eqref{eq:bidder1_stable}, \eqref{eq:bidder2_stable}, \eqref{eq:undominated} if and only if the inequalities below all hold:

\begin{equation} \label{eq:mu1_mu2_range2} 
\left.
\begin{aligned}
\text{If } \  \frac{T_2 V_2(R_k)}{T_1 V_1(R_k)} < 1, \quad
&\frac{\mu_2}{\mu_1} \le \frac{T_1 V_1(L_k)}{T_2 V_2(L_k)} \\
\text{If } \  \frac{T_2 V_2(R_k)}{T_1 V_1(R_k)} \ge 1, \quad
&\frac{\mu_2}{\mu_1} < \frac{T_1 V_1(L_k) \cdot T_1 V_1(R_k)}{T_2 V_2(L_k) \cdot T_2 V_2(R_k)} \\
\text{If } \  \frac{T_1 V_1(L_{k+1})}{T_2 V_2(L_{k+1})} < 1, \quad
& \frac{\mu_2}{\mu_1} \ge \frac{T_1 V_1(R_{k+1})}{T_2 V_2(R_{k+1})}  \\
\text{If } \  \frac{T_1 V_1(L_{k+1})}{T_2 V_2(L_{k+1})} \ge 1, \quad 
& \frac{\mu_2}{\mu_1} >  \frac{T_1 V_1(L_{k+1}) \cdot T_1 V_1(R_{k+1})}{T_2 V_2(L_{k+1}) \cdot T_2 V_2(R_{k+1})} 
\end{aligned}
\right\}
\end{equation}

\begin{align}
&\frac{T_2 V_2(R_{k+1})}{T_1 V_1(R_{k+1})} \ge 1 \tag{\ref{eq:mu1_ge_1}}  \\
&\frac{T_1 V_1(L_k)}{T_2 V_2(L_k)} \ge 1 \tag{\ref{eq:mu2_ge_1}}
\end{align}

\end{lemma}

\begin{proof}
The feasible ranges of $\mu_1$, $\mu_2$ from Inequalities \eqref{eq:bidder1_tCPA}, \eqref{eq:bidder2_tCPA}, \eqref{eq:bidder1_stable}, \eqref{eq:bidder2_stable} are equivalent to:
\begin{align}
\frac{T_2 V_2(R_k)}{T_1 V_1(R_k)} & < \mu_1 \le \frac{T_2 V_2(R_{k+1})}{T_1 V_1(R_{k+1})} && \notag \\
\frac{T_1 V_1(L_{k+1})}{T_2 V_2(L_{k+1})} & < \mu_2 \le \frac{T_1 V_1(L_k)}{T_2 V_2(L_k)} && \label{eq:mu1_mu2_range}
\end{align}

We combine the above conditions with Inequality \eqref{eq:undominated}. Inequality \eqref{eq:mu1_ge_1} and \eqref{eq:mu2_ge_1} are necessary for the ranges in Inequality \eqref{eq:undominated} and \eqref{eq:mu1_mu2_range} to overlap. Iterating different cases, there exist $\mu_1$, $\mu_2$ that satisfy both Inequality \eqref{eq:mu1_mu2_range} and \eqref{eq:undominated} if and only if Inequality \eqref{eq:mu1_mu2_range2}, \eqref{eq:mu1_ge_1}, \eqref{eq:mu2_ge_1} all hold.

\end{proof}

\begin{lemma}
\label{lemma:NE_condition2}
There exists feasible $\mu_1$, $\mu_2$ that satisfies Inequalities \eqref{eq:mu1mu2_range} and \eqref{eq:mu1_mu2_range2}, if and only if Inequality \eqref{eq:NE_condition1} and \eqref{eq:NE_condition2} both hold.
\end{lemma}

\begin{proof}
There exists feasible $\mu_1$, $\mu_2$ that satisfies Inequalities \eqref{eq:mu1mu2_range} and \eqref{eq:mu1_mu2_range2}, if and only if the range of $\frac{\mu_2}{\mu_1}$ in \eqref{eq:mu1mu2_range} and \eqref{eq:mu1_mu2_range2} have overlap.

We split \eqref{eq:mu1mu2_range} into two cases:

\paragraph{Case 1: $\frac{T_1 v_{1,k+1}}{T_2 v_{2,k+1}} < \frac{\mu_2}{\mu_1} < \frac{T_1 v_{1,k}}{T_2 v_{2,k}}$.}
In this case, $\frac{T_1 v_{1,k+1}}{T_2 v_{2,k+1}} < \frac{\mu_2}{\mu_1} < \frac{T_1 v_{1,k}}{T_2 v_{2,k}}$ and \eqref{eq:mu1_mu2_range2} overlap is equivalent to:

\begin{align*}
\frac{T_1 v_{1,k}}{T_2 v_{2,k}} 
> \frac{T_1 V_1(R_{k+1})}{T_2 V_2(R_{k+1})} \cdot \max\{\frac{T_1 V_1(L_{k+1})}{T_2 V_2(L_{k+1})}, 1\} \\
\frac{T_1 V_1(L_k)}{T_2 V_2(L_k)} \cdot \min \{ \frac{T_1 V_1(R_k)}{T_2 V_2(R_k)}, 1 \}
> \frac{T_1 v_{1,k+1}}{T_2 v_{2,k+1}}
\end{align*}

Cancel $\frac{T_1}{T_2}$ on both sides:

\begin{align}
\frac{v_{1,k}}{v_{2,k}} 
> \frac{V_1(R_{k+1})}{V_2(R_{k+1})} \cdot \max\{\frac{T_1 V_1(L_{k+1})}{T_2 V_2(L_{k+1})}, 1\} \notag \\
\frac{V_1(L_k)}{V_2(L_k)} \cdot \min \{ \frac{T_1 V_1(R_k)}{T_2 V_2(R_k)}, 1 \}
> \frac{v_{1,k+1}}{v_{2,k+1}} \label{eq:NE_condition_set1}
\end{align}

\paragraph{Case 2.1: Tie breaking towards bidder 1 and $\frac{T_1 v_{1,k+1}}{T_2 v_{2,k+1}} < \frac{\mu_2}{\mu_1} = \frac{T_1 v_{1,k}}{T_2 v_{2,k}}$.} We assume bidder 1 wins when there is a tie in bids. The only scenario in this case not covered by the strict conditions in Inequality~\eqref{eq:NE_condition_set1} is the third line in Inequality~\eqref{eq:mu1_mu2_range2} when $ \frac{T_1 V_1(L_{k+1})}{T_2 V_2(L_{k+1})} < 1$ and $\frac{\mu_2}{\mu_1} = \frac{T_1 V_1(R_{k+1})}{T_2 V_2(R_{k+1})}$. Thus, for $\frac{T_1 v_{1,k+1}}{T_2 v_{2,k+1}} < \frac{\mu_2}{\mu_1} = \frac{T_1 v_{1,k}}{T_2 v_{2,k}}$ and \eqref{eq:mu1_mu2_range2} to have overlap is equivalent to either Inequality~\eqref{eq:NE_condition_set1} holds, or both the following two inequalities hold:

\begin{align}
    \frac{T_1 V_1(L_{k+1})}{T_2 V_2(L_{k+1})} < 1 \notag \\
    \frac{\mu_2}{\mu_1} = \frac{T_1 v_{1,k}}{T_2 v_{2,k}} = \frac{T_1 V_1(R_{k+1})}{T_2 V_2(R_{k+1})} \label{eq:mu1mu2_invalid_condition}
\end{align}

By Lemma~\ref{lemma:mediant}, Inequality~\eqref{eq:mu1mu2_invalid_condition} in Case 2.1 does not hold, so this scenario is ruled out. 

\paragraph{Case 2.2: Tie breaking towards bidder 2 and $\frac{T_1 v_{1,k+1}}{T_2 v_{2,k+1}} = \frac{\mu_2}{\mu_1} < \frac{T_1 v_{1,k}}{T_2 v_{2,k}}$.} We assume bidder 2 wins when there is a tie in bids. The only scenario in this case not covered by the strict conditions in Inequality~\eqref{eq:NE_condition_set1} is the first line in Inequality~\eqref{eq:mu1_mu2_range2} when $\frac{T_2 V_2(R_k)}{T_1 V_1(R_k)} < 1$ and $\frac{\mu_2}{\mu_1} = \frac{T_1 V_1(L_k)}{T_2 V_2(L_k)}$. Thus, for $\frac{T_1 v_{1,k+1}}{T_2 v_{2,k+1}} = \frac{\mu_2}{\mu_1} < \frac{T_1 v_{1,k}}{T_2 v_{2,k}}$ and \eqref{eq:mu1_mu2_range2} to have overlap is equivalent to either Inequality~\eqref{eq:NE_condition_set1} holds, or both the following two inequalities hold:

\begin{align}
   \frac{T_2 V_2(R_k)}{T_1 V_1(R_k)} < 1 \notag \\
    \frac{\mu_2}{\mu_1} = \frac{T_1 v_{1,k+1}}{T_2 v_{2,k+1}} = \frac{T_1 V_1(L_k)}{T_2 V_2(L_k)} \label{eq:mu1mu2_invalid_condition2}
\end{align}

By Lemma~\ref{lemma:mediant}, Inequality~\eqref{eq:mu1mu2_invalid_condition2} in Case 2.2 does not hold, so this scenario is ruled out. \\

Unifying Case 1 and Case 2, both \eqref{eq:mu1mu2_range} and \eqref{eq:mu1_mu2_range2} hold if and only if Inequality~\eqref{eq:NE_condition_set1} holds.

Using Lemma~\ref{lemma:mediant}, Inequality~\eqref{eq:NE_condition_set1} could be simplified to \eqref{eq:NE_condition1} and \eqref{eq:NE_condition2}.

\end{proof}

Now we are ready to finish the proof of Theorem~\ref{thm:NE_condition}. Combining Lemma~\ref{lemma:NE_condition1} and Lemma~\ref{lemma:NE_condition2}, there exists $\mu_1$, $\mu_2$ that all the inequalities in Condition~\ref{cond:nk} are satisfied if and only if Inequality \eqref{eq:NE_condition1}, \eqref{eq:NE_condition2}, \eqref{eq:mu1_ge_1}, \eqref{eq:mu2_ge_1} all hold.

\end{proof}

\subsection{Proof of Lemma~\ref{lemma:kmin_kmax}}
\label{app:kmin_kmax}

We first show Lemma~\ref{lemma:ds_mediant} by the mediant inequality, which will be used to proof Lemma~\ref{lemma:kmin_kmax}.
\begin{lemma}
\label{lemma:ds_mediant}
For $\forall \ 0 \le \hk < k$, the following two statements are both true: \\
1, $\frac{T_2 V_2(R_{\hk})}{T_1 V_1(R_{\hk})} < \frac{T_2 V_2(R_k)}{T_1 V_1(R_k)}$ \\
2, $\frac{T1 V_1(L_{\hk})}{T_2 V_2(L_{\hk})} < \frac{T1 V_1(L_k)}{T_2 V_2(L_k)}$\\
\end{lemma}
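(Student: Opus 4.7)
The plan is to prove both parts by applying the mediant inequality in the style of Lemma~\ref{lemma:mediant}, using the disjoint decompositions $R_{\hk}=\{\hk,\hk+1,\dots,k-1\}\cup R_k$ and $L_k=L_{\hk}\cup\{\hk+1,\dots,k\}$. The positive constants $T_1,T_2$ appear identically on both sides of each inequality, so they cancel and it suffices to compare the purely combinatorial ratios of the $V_i$-sums.

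For part 1, the ordering in \eqref{eq:ds_order} implies that each index $j\in\{\hk,\dots,k-1\}$ satisfies $v_{2,j}/v_{1,j}<v_{2,k}/v_{1,k}$, while each $j\in R_k$ satisfies $v_{2,j}/v_{1,j}\ge v_{2,k}/v_{1,k}$. Applying the mediant property to each block separately yields
\[
\frac{\sum_{j=\hk}^{k-1} v_{2,j}}{\sum_{j=\hk}^{k-1} v_{1,j}} \;<\; \frac{v_{2,k}}{v_{1,k}} \;\le\; \frac{V_2(R_k)}{V_1(R_k)}.
\]
A final mediant step combining these two outer fractions (adding numerators and denominators) gives $V_2(R_{\hk})/V_1(R_{\hk})<V_2(R_k)/V_1(R_k)$, and multiplying through by $T_2/T_1$ finishes part 1.

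Part 2 is proven by an entirely symmetric argument on the $L$-sets. The extra indices $j\in\{\hk+1,\dots,k\}$ that distinguish $L_k$ from $L_{\hk}$ satisfy $v_{1,j}/v_{2,j}<v_{1,\hk}/v_{2,\hk}$ by \eqref{eq:ds_order}, while indices $j\in L_{\hk}$ satisfy $v_{1,j}/v_{2,j}\ge v_{1,\hk}/v_{2,\hk}$. The same three-step mediant chain then applies with the roles of the two advertisers swapped, producing the corresponding strict comparison of the two ratios on $L_{\hk}$ and $L_k$.

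I do not expect a genuine obstacle: the lemma is essentially a strict, aggregated version of Lemma~\ref{lemma:mediant} and reduces to a short bookkeeping exercise with the mediant property. The only points that need care are (i) non-emptiness of the auxiliary block $\{\hk,\dots,k-1\}$, which follows from $\hk<k$, and (ii) confirming the direction of each inequality by tracking which side of $v_{1,k}/v_{2,k}$ (respectively $v_{1,\hk}/v_{2,\hk}$) the indices in each block sit on.
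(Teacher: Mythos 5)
Your approach coincides with the paper's: split the index set into the shared block and the extra block, bound each block's aggregate ratio against a single pivot query via the mediant property, and take one more mediant to compare the two aggregates. Part 1 is correct as you argue (the paper pivots on query $k-1$, the largest $v_2/v_1$ ratio in the extra block, rather than on query $k$; either pivot works).

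The issue is part 2. The chain you describe --- the extra indices $\{\hk+1,\dots,k\}$ have $v_{1,j}/v_{2,j} < v_{1,\hk}/v_{2,\hk}$ while $L_{\hk}$ has aggregate ratio at least $v_{1,\hk}/v_{2,\hk}$ --- yields
\[
\frac{T_1 V_1(L_k)}{T_2 V_2(L_k)} < \frac{T_1 V_1(L_{\hk})}{T_2 V_2(L_{\hk})},
\]
which is the \emph{reverse} of the inequality written in the lemma. And it must be: since $v_{1,j}/v_{2,j}$ is decreasing in $j$, the ratio $V_1(L_j)/V_2(L_j)$ is decreasing in $j$, so part 2 as literally stated is false (e.g.\ $v_{1,1}=2$, $v_{2,1}=1$, $v_{1,2}=v_{2,2}=1$ gives $V_1(L_1)/V_2(L_1)=2 > 3/2 = V_1(L_2)/V_2(L_2)$). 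The paper's own proof in fact establishes the reversed inequality, which is also the monotonicity actually used in Lemma~\ref{lemma:kmin_kmax} (it is what makes every $k \le \kmax(T_1,T_2)$ satisfy \eqref{eq:mu2_ge_1}); the statement of part 2 evidently has the two sides exchanged. So your sketch is mathematically sound and matches the paper, but you flagged ``confirming the direction of each inequality'' as the delicate point and then left it unresolved: as written, claiming that the symmetric chain proves part 2 in the stated direction would be wrong. Make explicit that your argument delivers the decreasing monotonicity of $V_1(L_j)/V_2(L_j)$ in $j$, i.e.\ state and prove part 2 with $\hk$ and $k$ swapped.
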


\begin{proof}
Remember the queries are ordered by Inequality \eqref{eq:ds_order}$. \forall j \in [k, n]$, by the mediant inequality, we know that 

\[
\frac{T_2 v_{2,j}}{T_1 v_{1, j}} > \frac{T_2 v_{2,k-1}}{T_1 v_{1, k-1}} \ge \frac{T_2 \sum_{j=\hk}^{k-1} v_{2,j} }{T_1 \sum_{j=\hk}^{k-1} v_{1,j}}
\]

Again by the mediant inequality,
\begin{align*}
    \frac{T_2 V_2(R_{\hk})}{T_1 V_1(R_{\hk})}
    &= \frac{T_2 \sum_{j=\hk}^n v_{2,j}}{T_1 \sum_{j=\hk}^n v_{1,j}}\\
    &= \frac{T_2 (\sum_{j=\hk}^{k-1} v_{2,j} + \sum_{j=k}^n v_{2,j})}{T_1 (\sum_{j=\hk}^{k-1} v_{1,j} + \sum_{j=k}^n v_{1,j})}\\
    &< \frac{T_2 \sum_{j=k}^n v_{2,j}}{T_1 \sum_{j=k}^n v_{1,j}}\\
    &= \frac{T_2 V_2(R_k)}{T_1 V_1(R_k)}
\end{align*}

Similarly, $\forall j \in [\hk+1, k]$, by the mediant inequality, we know that 

\[
\frac{T_1 v_{1,j}}{T_2 v_{2,j}} < \frac{T_1 v_{1,\hk-1}}{T_2 v_{2, \hk-1}} \le \frac{T_1 \sum_{j=1}^{\hk} v_{1,j} }{T_2 \sum_{j=1}^{\hk} v_{2,j}}
\]

By the mediant inequality,
\begin{align*}
    \frac{T_1 V_1(L_k)}{T_2 V_2(R_k)}
    &= \frac{T_1 \sum_{j=1}^k v_{1,j}}{T_2 \sum_{j=1}^k v_{2,j}}\\
    &= \frac{T_1 (\sum_{j=1}^{\hk} v_{1,j} + \sum_{j=\hk+1}^k v_{1,j})}{T_2 (\sum_{j=1}^{\hk} v_{2,j} + \sum_{j=\hk+1}^k v_{2,j})}\\
    &< \frac{T_1 \sum_{j=1}^{\hk} v_{1,j}}{T_2 \sum_{j=1}^{\hk} v_{2,j}}\\
    &= \frac{T1 V_1(L_{\hk})}{T_2 V_2(L_{\hk})}
\end{align*}

\end{proof}

To simplify Inequalities \eqref{eq:mu1_ge_1} and \eqref{eq:mu2_ge_1} in Theorem~\ref{thm:NE_condition}, with reported targets $T_1$, $T_2$, we define 

\begin{align*}
\kmin(T_1, T_2) = \min \{k: \frac{T_2 V_2(R_{k+1})}{T_1 V_1(R_{k+1})} \ge 1\}, \quad \kmax(T_1, T_2) = \max \{k: \frac{T_1 V_1(L_k)}{T_2 V_2(L_k)} \ge 1 \} 
\end{align*}

\begin{lemma}
\label{lemma:kmin_kmax}
With reported targets $T_1$, $T_2$, Inequality \eqref{eq:mu1_ge_1} and \eqref{eq:mu2_ge_1} in Theorem~\ref{thm:NE_condition} both hold if and only if $\kmin(T_1, T_2) \le k \le \kmax(T_1, T_2)$.
\end{lemma}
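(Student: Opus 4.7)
The plan is to derive the claim directly from the monotonicity supplied by Lemma~\ref{lemma:ds_mediant}. Each of \eqref{eq:mu1_ge_1} and \eqref{eq:mu2_ge_1} compares a single $k$-indexed quantity against $1$, and that quantity is monotone in $k$; so each inequality carves out a one-sided set of admissible $k$, and the intersection is precisely $\{k : \kmin(T_1,T_2) \le k \le \kmax(T_1,T_2)\}$.

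First I would invoke the two monotonicities implied by Lemma~\ref{lemma:ds_mediant}. The ratio $\frac{T_2 V_2(R_k)}{T_1 V_1(R_k)}$ is strictly increasing in $k$: shrinking $R_k$ from the left strips off the queries with the largest $v_1/v_2$, i.e.\ smallest $v_2/v_1$, so by the mediant inequality the reciprocal average rises. In particular $\frac{T_2 V_2(R_{k+1})}{T_1 V_1(R_{k+1})}$ is also strictly increasing in $k$. Dually, $\frac{T_1 V_1(L_k)}{T_2 V_2(L_k)}$ is strictly decreasing in $k$: extending $L_k$ on the right appends queries whose $v_1/v_2$ is smaller than anything already in $L_k$, so the mediant of the prefix falls.

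With these in hand, \eqref{eq:mu1_ge_1} is an increasing function of $k$ being $\ge 1$, so the set of $k$ at which it holds is upward closed; by the definition of $\kmin(T_1,T_2)$ as the smallest such $k$, this set equals $\{k : k \ge \kmin(T_1,T_2)\}$. Likewise \eqref{eq:mu2_ge_1} is a decreasing function of $k$ being $\ge 1$, so the set of $k$ at which it holds is downward closed; by the definition of $\kmax(T_1,T_2)$ as the largest such $k$, it equals $\{k : k \le \kmax(T_1,T_2)\}$. Intersecting the two yields the lemma.

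I do not anticipate a real technical obstacle here, since the result is a direct corollary of Lemma~\ref{lemma:ds_mediant}. The only care required is to keep straight that the $R$-type and $L$-type ratios move in opposite directions under the ordering~\eqref{eq:ds_order}, and therefore that one inequality produces an upward-closed admissible set for $k$ while the other produces a downward-closed one; choosing $\min$ for one endpoint and $\max$ for the other falls out of this asymmetry.
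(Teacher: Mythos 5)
Your proposal is correct and follows essentially the same route as the paper: the ``only if'' direction is immediate from the definitions of $\kmin$ and $\kmax$, and the ``if'' direction uses the mediant-based monotonicity of $\frac{T_2 V_2(R_k)}{T_1 V_1(R_k)}$ (increasing in $k$) and $\frac{T_1 V_1(L_k)}{T_2 V_2(L_k)}$ (decreasing in $k$) from Lemma~\ref{lemma:ds_mediant} to conclude the admissible sets are upward- and downward-closed, respectively. Your stated monotone directions match what the proof of Lemma~\ref{lemma:ds_mediant} actually establishes (note the paper's statement 2 of that lemma has its inequality written in the reversed direction, which you implicitly corrected).
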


\begin{proof}
Based on the definition of $\kmin$ and $\kmax$, it is straightforward that if $k < \kmin(T_1, T_2)$, then Inequality \eqref{eq:mu1_ge_1} does not hold. If $k > \kmax(T_1, T_2)$, then Inequality \eqref{eq:mu2_ge_1} does not hold. So we have proved the ``only if'' part of the statement. 

Next, by Lemma~\ref{lemma:ds_mediant}, $\forall k \ge k_{min}$ satisfies \eqref{eq:mu1_ge_1}. $\forall k \le k_{max}$ satisfies \eqref{eq:mu2_ge_1}. Thus, if $\kmin(T_1, T_2) \le k \le \kmax(T_1, T_2)$, then Inequality \eqref{eq:mu1_ge_1} and \eqref{eq:mu2_ge_1} must both hold. 
\end{proof}

To prove SPA is RAIC, we show in Theorem~\ref{thm:uniform_RAIC} that if there exists an equilibrium with reported targets $(T_1, T_2)$ and allocation $N_k$, then there must exists an equilibrium with reported targets $(T_1' < T_1, T_2)$ and $N_{k'}$, such that $k' \le k$. To better represent Theorem~\ref{thm:NE_condition} with varying reported targets and $N_k$ value, we establish the following notations.

\begin{align*}
  \C{1}{T_1}{k} = \frac{v_{1,k} \cdot V_2(R_{k+1})}{v_{2,k} \cdot V_1(R_{k+1})} 
  - \frac{T_1 V_1(L_{k+1})}{T_2 V_2(L_{k+1})}, \quad
  \C{2}{T_1}{k} = \frac{T_1 V_1(R_k)}{T_2 V_2(R_k)} - \frac{v_{1,k+1} \cdot V_2(L_k)}{v_{2,k+1} \cdot V_1(L_k)} 
\end{align*}

We get the following corollary by Theorem~\ref{thm:NE_condition} and Lemma~\ref{lemma:kmin_kmax}.

\begin{corollary}
\label{cor:NE_condition}
With advertiser reported targets $T_1$, $T_2$, Condition~\ref{cond:nk} is equivalent to: an auto-bidder equilibrium $\Eq$ with allocation $N_k$ exists if and only if the all of the following conditions hold:
\begin{align*}
 \C{1}{T_1}{k} > 0 ,  \quad
 \C{2}{T_1}{k} > 0 ,  \quad
 \kmin(T_1, T_2) \le k \le \kmax(T_1, T_2).
\end{align*}
\end{corollary}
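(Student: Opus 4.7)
The plan is to derive this corollary directly by combining \textbf{Theorem~\ref{thm:NE_condition}} with \textbf{Lemma~\ref{lemma:kmin_kmax}}; no new argument is required beyond matching the quantities on each side to the stated definitions of $\C{1}{T_1}{k}$, $\C{2}{T_1}{k}$, $\kmin(T_1, T_2)$, and $\kmax(T_1, T_2)$. The only real work is the bookkeeping that the four inequalities produced by Theorem~\ref{thm:NE_condition} repackage cleanly into the three conditions asserted by the corollary.

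First I would invoke Theorem~\ref{thm:NE_condition}, which characterizes the existence of an auto-bidder equilibrium with allocation $N_k$ as the simultaneous validity of the four inequalities \eqref{eq:NE_condition1}, \eqref{eq:NE_condition2}, \eqref{eq:mu1_ge_1}, and \eqref{eq:mu2_ge_1}. Next I would unfold the definitions, observing that $\C{1}{T_1}{k} > 0$ is precisely the rewriting of \eqref{eq:NE_condition1} obtained by moving the left-hand side to the right, and that $\C{2}{T_1}{k} > 0$ is the analogous rewriting of \eqref{eq:NE_condition2}. Finally, Lemma~\ref{lemma:kmin_kmax} packages the remaining two inequalities \eqref{eq:mu1_ge_1} and \eqref{eq:mu2_ge_1} into the single bound $\kmin(T_1, T_2) \le k \le \kmax(T_1, T_2)$.

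Chaining these three equivalences yields the corollary. There is essentially no obstacle: the step is a pure notational consolidation whose purpose is to supply a cleaner, variable-free set of conditions that the subsequent proofs of Theorems~\ref{thm:uniform_RAIC} and \ref{thm:uniform_OAIC} can manipulate symbolically. The only care point is to note that the strict inequalities \eqref{eq:NE_condition1} and \eqref{eq:NE_condition2} correspond correctly to the strict positivity conditions $\C{1}{T_1}{k} > 0$ and $\C{2}{T_1}{k} > 0$, while the weak inequalities \eqref{eq:mu1_ge_1} and \eqref{eq:mu2_ge_1} match the weak bounds $\ge 1$ used to define $\kmin$ and $\kmax$, so no boundary cases slip through.
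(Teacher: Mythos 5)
Your proposal is correct and matches the paper's own derivation exactly: the corollary is obtained by combining Theorem~\ref{thm:NE_condition} with Lemma~\ref{lemma:kmin_kmax}, with $\C{1}{T_1}{k} > 0$ and $\C{2}{T_1}{k} > 0$ being direct rewritings of Inequalities~\eqref{eq:NE_condition1} and~\eqref{eq:NE_condition2}. No gaps; the bookkeeping on strict versus weak inequalities is handled as you describe.
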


\begin{theorem}
\label{thm:uniform_RAIC}
SPA is RAIC in the setting of two uniform bidding tCPA auto-bidders.
\end{theorem}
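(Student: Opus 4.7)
The plan is to apply Corollary~\ref{cor:NE_condition} and run a descending induction on the allocation index $k$. Starting from the hypothesis that $N_k$ is an equilibrium under $(T_1, T_2)$, so that $\C{1}{T_1}{k} > 0$, $\C{2}{T_1}{k} > 0$, and $\kmin(T_1, T_2) \le k \le \kmax(T_1, T_2)$, my goal is to produce some $k^* \le k$ such that $N_{k^*}$ is an equilibrium under $(T_1', T_2)$. Since $\LW_1(N_{k'}) = T_1 \sum_{j=1}^{k'} v_{1,j}$ is nondecreasing in $k'$, this immediately yields RAIC for advertiser $1$.

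First I would establish the monotonicity of the four conditions in $T_1$. Because $T_1' < T_1$, the subtracted term $\frac{T_1 V_1(L_{k'+1})}{T_2 V_2(L_{k'+1})}$ inside $\C{1}{T_1}{k'}$ shrinks, so $\C{1}{T_1'}{k'} > \C{1}{T_1}{k'}$; similarly, the defining ratio $\frac{T_2 V_2(R_{k+1})}{T_1 V_1(R_{k+1})}$ for $\kmin$ grows as $T_1$ decreases, so $\kmin(T_1', T_2) \le \kmin(T_1, T_2)$. In particular, the two conditions $\C{1}{T_1'}{k} > 0$ and $\kmin(T_1', T_2) \le k$ hold automatically at the starting index. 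The only two conditions that can fail after switching to $T_1'$ are $\C{2}{T_1'}{k} > 0$ and $k \le \kmax(T_1', T_2)$, since $T_1$ appears positively both inside $\C{2}$ and inside the defining ratio of $\kmax$.

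Next I would execute a descending induction on $k'$ starting from $k$. At each $k'$ I check the four conditions of Corollary~\ref{cor:NE_condition} under $(T_1', T_2)$; if they all hold, set $k^* = k'$ and stop. Otherwise, mirroring the warm-up proposition, the key observation is that the failure of bidder $1$'s tCPA piece inside $\C{2}{T_1'}{k'}$ is syntactically the same inequality as bidder $1$'s stability condition at index $k' - 1$; likewise, a failure caused by $k' > \kmax(T_1', T_2)$ reads $\frac{T_1' V_1(L_{k'})}{T_2 V_2(L_{k'})} < 1$, which by Lemma~\ref{lemma:ds_mediant} forces $k' - 1$ to still lie inside $[\kmin(T_1', T_2), \kmax(T_1', T_2)]$. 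Combined with the automatic monotonicities, this implies $k' \ge 1$ and that moving down to $k' - 1$ repairs the broken condition without breaking the previously verified ones.

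If the induction reaches $k' = 0$ without locating a valid equilibrium, I would handle the base case directly: bidder $1$ wins nothing in $N_0$, so their tCPA is trivial, while the failures accumulated along the descent force bidder $2$'s tCPA and stability to hold at $k' = 0$, certifying $N_0$ as an equilibrium under $(T_1', T_2)$. The main obstacle I foresee is the bookkeeping in the inductive step: each specific failure mode at $k'$ must be matched to the correct monotone change at $k' - 1$, while simultaneously re-verifying $\C{1}{T_1'}{k'-1} > 0$ and $\kmin(T_1', T_2) \le k' - 1$. Closing this loop cleanly appears to require both parts of Lemma~\ref{lemma:mediant} and of Lemma~\ref{lemma:ds_mediant}, which together guarantee the relevant ratios behave monotonically as $k'$ decreases.
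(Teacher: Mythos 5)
Your proposal is correct and follows essentially the same route as the paper's proof: characterize equilibria via Corollary~\ref{cor:NE_condition}, exploit monotonicity in $T_1$, and run a descending induction in which a failure of $\C{2}{T_1'}{k'} > 0$ (or of $k' \le \kmax(T_1', T_2)$) at index $k'$ is converted into $\C{1}{T_1'}{k'-1} > 0$ at the next index down; the bookkeeping you flag is exactly what the paper's Lemmas~\ref{lemma:kmax}, \ref{lemma:kmin} and \ref{lemma:induction} supply, with the descent started at $\min\{\kmax(T_1',T_2), k\}$ and floored at $\kmin(T_1', T_2)$ rather than terminating at $k'=0$. The only imprecision is your one-step claim that a $\kmax$-violation at $k'$ forces $k'-1$ back into the feasible range (this may take several steps, and the fact needed at the landing index $\kmax(T_1',T_2)$ is precisely Lemma~\ref{lemma:kmax}, proved via Lemma~\ref{lemma:mediant}), but this does not affect the validity of the overall argument.
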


\begin{proof}
Suppose advertiser 1's true target is $T_1$. Advertiser 2's reported target is $T_2$ and fixed. We prove the theorem by showing that with any reported target $T_1' < T_1$, for any auto-bidder equilibrium $\Eq$ and allocation $\pi = N_k$ with reported targets ($T_1$, $T_2$), there exists an auto-bidder equilibrium $\Eq'$ and allocation $\pi' = N_{k'}$ with reported targets ($T_1'$, $T_2$), such that $\LW_1(\pi') \le \LW_1(\pi)$. Note that $\LW_1(\pi') \le \LW_1(\pi)$ is equivalent to $k' \le k$.

Because there exists $\Eq$ with allocation $\pi = N_k$ with reported targets ($T_1$, $T_2$), we know that all the conditions in Corollary~\ref{cor:NE_condition} are satisfied.


With reported targets ($T_1'$, $T_2$), for any $k' \le k$, to show that there exists $\Eq'$ and allocation $\pi' = N_{k'}$, we need to show that
\begin{align*}
 \C{1}{T_1'}{k'} > 0 ,  \quad
 \C{2}{T_1'}{k'} > 0 ,  \quad
 \kmin(T_1', T_2) \le k' \le \kmax(T_1', T_2).
\end{align*}

The following two lemmas describe some properties of $\kmin(T_1', T_2)$ and $\kmax(T_1', T_2)$. 

\begin{lemma}
\label{lemma:kmax}
$\C{1}{T_1'}{k'} > 0$ when $k'=\kmax(T_1', T_2)$.
\end{lemma}

\begin{proof}
When $k'=\kmax(T_1', T_2)$, we know that $k'+1$ does not satisfy Inequality~\eqref{eq:mu2_ge_1} with reported targets $(T_1', T_2)$, so:
\[
\frac{T_1' V_1(L_{k'+1})}{T_2 V_2(L_{k'+1})} < 1
\]
The above inequality is equivalent to $\C{1}{T_1'}{k'} > 0$.

By Lemma~\ref{lemma:mediant},

\[
\frac{v_{1,k'} \cdot V_2(R_{k'+1})}{v_{2,k'} \cdot V_1(R_{k'+1})} > 1 > \frac{T_1' V_1(L_{k'+1})}{T_2 V_2(L_{k'+1})}
\]

\end{proof}

\begin{lemma}
\label{lemma:kmin}
$\C{2}{T_1'}{k'} > 0$ when $k'=\kmin(T_1', T_2)$.
\end{lemma}

\begin{proof}
When $k'=\kmin(T_1', T_2)$, we know that $k'-1$ does not satisfy Inequality~\eqref{eq:mu1_ge_1} with reported targets $(T_1', T_2)$, so:

\[
\frac{T_2 V_2(R_{k'})}{T_1' V_1(R_{k'})} < 1
\]

By Lemma~\ref{lemma:mediant}, 

\[
\frac{v_{1,k'+1} \cdot V_2(L_{k'})}{v_{2,k'+1} \cdot V_1(L_{k'})} < 1
< \frac{T_1' V_1(R_{k'})}{T_2 V_2(R_{k'})}
\]

The above inequality is equivalent to $\C{2}{T_1'}{k'} > 0$.

\end{proof}

For $k' \le k$ such that $N_{k'}$ is an equilibrium allocation with reported targets ($T_1'$, $T_2$), by Corollary~\ref{cor:NE_condition}, $k'$ must satisfy $\kmin(T_1', T_2) \le k' \le \kmax(T_1', T_2)$, thus $k' \in [\kmin(T_1', T_2), \min\{\kmax(T_1', T_2), k\}]$. In Lemma~\ref{lemma:induction}, we iterate all $k'$ in this range decreasingly, and show that there either exists an equilibrium allocation $N_{k'}$, or $\C{1}{T_1'}{k'} > 0$ and $\C{2}{T_1'}{k'} \le 0$, which are used when we consider the case with $k'-1$. 

\begin{lemma}
\label{lemma:induction}
For $j \in [\kmin(T_1', T_2), \min\{\kmax(T_1', T_2), k\}]$, at least one of the following two statements must be true:\\
Statement 1, There exists $k' \in [j, \min\{\kmax(T_1', T_2), k\}]$, such that $\C{1}{T_1'}{k'} > 0$ and $\C{2}{T_1'}{k'} > 0$.\\
Statement 2, $\C{1}{T_1'}{j} > 0$ and $\C{2}{T_1'}{j} \le 0$.
\end{lemma}

\begin{proof}
We prove this lemma by induction.

\noindent
\paragraph{Base case:} $j = \min\{\kmax(T_1', T_2), k\}$.\\

\noindent
If $\kmax(T_1', T_2) \le k$, $j = \kmax(T_1', T_2)$. By Lemma~\ref{lemma:kmax}, $\C{1}{T_1'}{j} > 0$.\\
If $\kmax(T_1', T_2) > k$, $j = k$, thus $\C{1}{T_1}{j} = \C{1}{T_1}{k}> 0$:

\begin{align*}
  \C{1}{T_1}{j} &=  \frac{v_{1,j} \cdot V_2(R_{j+1})}{v_{2,j} \cdot V_1(R_{j+1})} 
  - \frac{T_1 V_1(L_{j+1})}{T_2 V_2(L_{j+1})} > 0 \\
  \C{1}{T_1'}{j} &=  \frac{v_{1,j} \cdot V_2(R_{j+1})}{v_{2,j} \cdot V_1(R_{j+1})} 
  - \frac{T_1' V_1(L_{j+1})}{T_2 V_2(L_{j+1})}
  > \frac{v_{1,j} \cdot V_2(R_{j+1})}{v_{2,j} \cdot V_1(R_{j+1})} 
  - \frac{T_1 V_1(L_{j+1})}{T_2 V_2(L_{j+1})}
  > 0
\end{align*}

The inequality above holds because $T_1' < T_1$.
So we have proved that $\C{1}{T_1'}{j} > 0$ for $j = \min\{\kmax(T_1', T_2), k\}$. If $\C{2}{T_1'}{j} > 0$, then Statement 1 is true, otherwise Statement 2 is true, i.e., at least one of Statement 1 and 2 must be true.

\paragraph{Inductive Step}
Assume this lemma holds for $j \in (\kmin(T_1', T_2), \min\{\kmax(T_1', T_2), k\}]$, then we prove it also holds for $j-1$. Assuming Statement 1 does not hold for $j-1$, we show that $\C{1}{T_1'}{j-1} > 0$ and $\C{2}{T_1'}{j-1} \le 0$, i.e., Statement 2 must hold for $j-1$.

Assume Statement 1 does not hold for $j-1$, which means there does not exist $k' \in [j-1, \min\{\kmax(T_1', T_2), k\}]$, such that $\C{1}{T_1'}{k'} > 0$ and $\C{2}{T_1'}{k'} > 0$. Because $j \in [j-1, \min\{\kmax(T_1', T_2), k\}]$, then it must be $\C{1}{T_1'}{j} \le 0$ or $\C{2}{T_1'}{j} \le 0$. By the inductive assumption that the lemma holds for $j$, one of Statement 1 and 2 must hold, and we know that Statement 1 does not hold, so Statement 2 must hold for $j$, i.e., $\C{1}{T_1'}{j} > 0$ and $\C{2}{T_1'}{j} \le 0$. Thus,

\begin{align*}
    \C{2}{T_1'}{j} = \frac{T_1' V_1(R_j)}{T_2 V_2(R_j)} - \frac{v_{1,j+1} \cdot V_2(L_j)}{v_{2,j+1} \cdot V_1(L_j)} \le 0 
    \quad \Rightarrow \quad
    \frac{T_1' V_1(L_j)}{T_2 V_2(L_j)} \le \frac{v_{1,j+1} \cdot V_2(R_j)}{v_{2,j+1} \cdot V_1(R_j)}
\end{align*}


\begin{align*}
   \C{1}{T_1'}{j-1} 
   = \frac{v_{1,j-1} \cdot V_2(R_{j})}{v_{2,j-1} \cdot V_1(R_j)} 
  - \frac{T_1' V_1(L_j)}{T_2 V_2(L_j)} 
  > \frac{v_{1,j+1} \cdot V_2(R_{j})}{v_{2,j+1} \cdot V_1(R_j)} 
  - \frac{T_1' V_1(L_j)}{T_2 V_2(L_j)}
  \ge 0
\end{align*}

The above inequality holds because $\frac{v_{1,j+1}}{v_{2,j+1}} < \frac{v_{1,j-1}}{v_{2,j-1}}$. Remember we assume Statement 1 does not hold for $j-1$, thus at most one of $\C{1}{T_1'}{j-1} > 0$ and $\C{2}{T_1'}{j-1} > 0$ is true. And we have shown $\C{1}{T_1'}{j-1} > 0$, so it must be the case that $\C{1}{T_1'}{j-1} > 0$ and $\C{2}{T_1'}{j-1} \le 0$, i.e., Statement 2 holds for $j-1$. 

\end{proof}

Finally, we are ready to finish the proof of Theorem~\ref{thm:uniform_RAIC}. By Lemma~\ref{lemma:induction}, we know that one of Statement 1 and 2 must be true for $j=\kmin(T_1', T_2)$. By Lemma~\ref{lemma:kmin}, Statement 2 is false for $j=\kmin(T_1', T_2)$, so Statement 1 must be true. Thus, we have proved that there exists an auto-bidder equilibrium with allocation $N_{k'}$ with reported targets $(T_1', T_2)$, such that $k' \le k$.
\end{proof}

With similar techniques, we show SPA is OAIC for two uniform bidding auto-bidders.
\begin{theorem}
\label{thm:uniform_OAIC}
SPA is OAIC in the setting of two uniform bidding tCPA auto-bidders.
\end{theorem}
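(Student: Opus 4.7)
The plan is to mirror the proof of Theorem~\ref{thm:uniform_RAIC} but to iterate \emph{upward} in $k$ rather than downward. Using Corollary~\ref{cor:NE_condition}, the task reduces to the following: given an equilibrium $N_{k'}$ at reported targets $(T_1', T_2)$ with $T_1' < T_1$, exhibit some $k \ge k'$ in $[\kmin(T_1, T_2), \kmax(T_1, T_2)]$ with $\C{1}{T_1}{k} > 0$ and $\C{2}{T_1}{k} > 0$. Such a $k$ corresponds to an equilibrium $N_k$ at $(T_1, T_2)$, and the bound $k \ge k'$ translates directly into $\LW_1(N_k) \ge \LW_1(N_{k'})$, establishing OAIC.

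Two monotonicity observations will organize the argument. First, since $T_1 > T_1'$, we have $\C{2}{T_1}{k} > \C{2}{T_1'}{k}$ for every $k$, while $\C{1}{T_1}{k}$ can only decrease as $T_1$ grows. Second, directly from the definitions, $\kmax(T_1, T_2) \ge \kmax(T_1', T_2) \ge k'$ and $\kmin(T_1, T_2) \ge \kmin(T_1', T_2)$; note that raising the target can push $k'$ strictly below $\kmin(T_1, T_2)$, so I would begin the upward iteration at $k_0 = \max\{k', \kmin(T_1, T_2)\}$.

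The technical heart is the upward analog of Lemma~\ref{lemma:induction}: whenever $\C{2}{T_1}{k} > 0$ but $\C{1}{T_1}{k} \le 0$, one has $\C{2}{T_1}{k+1} > 0$. The calculation is dual to the RAIC argument: rearranging $\C{1}{T_1}{k} \le 0$ gives $\frac{T_1 V_1(R_{k+1})}{T_2 V_2(R_{k+1})} \ge \frac{v_{1,k} V_2(L_{k+1})}{v_{2,k} V_1(L_{k+1})}$, and combining with $\frac{v_{1,k}}{v_{2,k}} > \frac{v_{1,k+2}}{v_{2,k+2}}$ from Inequality~\eqref{eq:ds_order} yields $\C{2}{T_1}{k+1} > 0$. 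The base case $\C{2}{T_1}{k_0} > 0$ holds because either $k_0 = k'$, in which case $\C{2}{T_1}{k_0} > \C{2}{T_1'}{k'} > 0$, or $k_0 = \kmin(T_1, T_2) > k'$, in which case a direct adaptation of Lemma~\ref{lemma:kmin} applied at $T_1$ supplies the inequality.

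Iterating $k$ upward from $k_0$, either we encounter some $k$ already satisfying $\C{1}{T_1}{k} > 0$ (with the induction preserving $\C{2}{T_1}{k} > 0$), or we reach $k = \kmax(T_1, T_2)$, where the direct adaptation of Lemma~\ref{lemma:kmax} at $T_1$ guarantees $\C{1}{T_1}{k} > 0$. In either case $N_k$ is an equilibrium at $(T_1, T_2)$ with $k \ge k'$, which suffices. I expect the main obstacle to be careful handling of the base case when $k' < \kmin(T_1, T_2)$: there $\C{2}{T_1}{k_0} > 0$ cannot be obtained by the straightforward monotonicity $\C{2}{T_1}{\cdot} > \C{2}{T_1'}{\cdot}$, and one must instead invoke the $\kmin$-based argument to kickstart the upward induction.
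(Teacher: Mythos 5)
Your proposal is correct and follows essentially the same route as the paper's proof: the reduction via Corollary~\ref{cor:NE_condition}, the upward iteration starting at $\max\{k',\kmin(T_1,T_2)\}$, the key step that $\C{1}{T_1}{k}\le 0$ together with $\frac{v_{1,k}}{v_{2,k}} > \frac{v_{1,k+2}}{v_{2,k+2}}$ forces $\C{2}{T_1}{k+1}>0$, and the use of (the $T_1$-versions of) Lemmas~\ref{lemma:kmin} and~\ref{lemma:kmax} for the base and terminal cases all mirror the paper's Lemma~\ref{lemma:induction_OAIC} argument. The monotonicity facts you cite ($\C{2}{T_1}{\cdot}>\C{2}{T_1'}{\cdot}$, $\kmax$ and $\kmin$ nondecreasing in $T_1$) are exactly what the paper uses implicitly, so no gap.
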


\begin{proof}
Suppose advertiser 1's true target is $T_1$. Advertiser 2's reported target is $T_2$ and fixed. We prove the theorem by showing that with any reported target $T_1' < T_1$, for any auto-bidder equilibrium $\Eq'$ and allocation $\pi' = N_{k'}$ with reported targets ($T_1'$, $T_2$), there exists an auto-bidder equilibrium $\Eq$ and allocation $\pi = N_k$ with reported targets ($T_1$, $T_2$), such that $\LW_1(\pi') \le \LW_1(\pi)$. Note that $\LW_1(\pi') \le \LW_1(\pi)$ is equivalent to $k' \le k$.
The rest of the proof is similar to Theorem~\ref{thm:uniform_RAIC}.

Because there exists $\Eq'$ with allocation $\pi' = N_{k'}$ with reported targets ($T_1'$, $T_2$), by Corollary~\ref{cor:NE_condition}, we know that 

\begin{align*}
 \C{1}{T_1'}{k'} &> 0 \\
 \C{2}{T_1'}{k'} &> 0 \\
 \kmin(T_1', T_2) &\le k' \le \kmax(T_1', T_2)
\end{align*}

With reported targets ($T_1$, $T_2$), for any $k \ge k'$, to show that there exists $\Eq$ and allocation $\pi = N_{k}$, we need to show:

\begin{align*}
 \C{1}{T_1}{k} &> 0 \\
 \C{2}{T_1}{k} &> 0 \\
 \kmin(T_1, T_2) &\le k \le \kmax(T_1, T_2)
\end{align*}

Similar to Lemma~\ref{lemma:induction}, for $k \ge k'$ such that $N_k$ is an equilibrium allocation with reported targets ($T_1$, $T_2$), by Corollary~\ref{cor:NE_condition}, $k$ must satisfy $\kmin(T_1, T_2) \le k \le \kmax(T_1, T_2)$, thus $k \in [\max\{\kmin(T_1, T_2), k'\}, \kmax(T_1, T_2)]$. In Lemma~\ref{lemma:induction_OAIC}, we iterate all $k$ in this range increasingly, and show that there either exists an equilibrium allocation $N_{k}$, or $\C{1}{T_1}{j} \le 0$ and $\C{2}{T_1}{j} > 0$, which are useful when we consider the case with $k+1$.

\begin{lemma}
\label{lemma:induction_OAIC}
For each $j \in [\max\{\kmin(T_1, T_2), k'\}, \kmax(T_1, T_2)]$, at least one of the following two statements must be true:\\
Statement 1, There exists $k \in [\max\{\kmin(T_1, T_2), k'\},j]$, such that $\C{1}{T_1}{k} > 0$ and $\C{2}{T_1}{k} > 0$.\\
Statement 2, $\C{1}{T_1}{j} \le 0$ and $\C{2}{T_1}{j} > 0$.
\end{lemma}

\begin{proof}
We prove this lemma by induction.

\noindent
\paragraph{Base case}: $j = \max\{\kmin(T_1, T_2), k'\}$.\\

\noindent
If $\kmin(T_1, T_2) \ge k'$, $j = \kmin(T_1, T_2)$. By Lemma~\ref{lemma:kmin}, $\C{2}{T_1}{k} > 0$.\\
If $\kmin(T_1, T_2) < k'$, $j = k'$, thus $\C{2}{T_1'}{j} = \C{2}{T_1'}{k'}> 0$:

\begin{align*}
\C{2}{T_1'}{j} &= \frac{T_1' V_1(R_j)}{T_2 V_2(R_j)} - \frac{v_{1,j+1} \cdot V_2(L_j)}{v_{2,j+1} \cdot V_1(L_j)} > 0 \\
\end{align*}

Remember $T_1' < T_1$, so:

\begin{align*}
\C{2}{T_1}{j} &= \frac{T_1 V_1(R_j)}{T_2 V_2(R_j)} - \frac{v_{1,j+1} \cdot V_2(L_j)}{v_{2,j+1} \cdot V_1(L_j)} \\
&> \frac{T_1' V_1(R_j)}{T_2 V_2(R_j)} - \frac{v_{1,j+1} \cdot V_2(L_j)}{v_{2,j+1} \cdot V_1(L_j)} \\
&>0
\end{align*}


So we have proved that $\C{2}{T_1}{j} > 0$ for $j = \max\{\kmin(T_1, T_2), k'\}$. If $\C{1}{T_1}{j} > 0$, then Statement 1 is true, otherwise Statement 2 is true, i.e., at least one of Statement 1 and 2 must be true.

\paragraph{Inductive Step}
Assume this lemma holds for $j \in [\max\{\kmin(T_1, T_2), k'\}, \kmax(T_1, T_2)]$, then we prove it also holds for $j+1$. Assuming Statement 1 does not hold for $j+1$, we show that $\C{1}{T_1}{j+1} \le 0$ and $\C{2}{T_1}{j+1} > 0$, thus Statement 2 must hold for $j+1$.

Assume Statement 1 does not hold for $j+1$, which means there does not exist $k \in [\max\{\kmin(T_1, T_2), k'\}, j+1\}$, such that $\C{1}{T_1}{k} > 0$ and $\C{2}{T_1}{k} > 0$. Because $j \in [\max\{\kmin(T_1, T_2), k'\}, j+1\}$, then it must be $\C{1}{T_1}{j} \le 0$ and $\C{2}{T_1}{j} \le 0$.By the inductive assumption that the lemma holds for $j$, one of Statement 1 and 2 must hold, and we know that Statement 1 does not hold, so Statement 2 must hold for $j$, i.e., $\C{1}{T_1}{j} \le 0$ and $\C{2}{T_1}{j} > 0$. Thus,

\begin{align*}
    \C{1}{T_1}{j} &\le 0 \\
    \frac{v_{1,j} \cdot V_2(R_{j+1})}{v_{2,j} \cdot V_1(R_{j+1})} - \frac{T_1 V_1(L_{j+1})}{T_2 V_2(L_{j+1})} &\le 0 \\
    \frac{v_{1,j} \cdot V_2(R_{j+1})}{v_{2,j} \cdot V_1(R_{j+1})} &\le \frac{T_1 V_1(L_{j+1})}{T_2 V_2(L_{j+1})} \\
    \frac{v_{1,j} \cdot V_2(L_{j+1})}{v_{2,j} \cdot V_1(L_{j+1})} &\le \frac{T_1 V_1(R_{j+1})}{T_2 V_2(R_{j+1})}
\end{align*}

Remember that $\frac{v_{1,j+2}}{v_{2,j+2}} < \frac{v_{1,j}}{v_{2,j}}$, so:

\begin{align*}
   \C{2}{T_1}{j+1} 
   &= \frac{T_1 V_1(R_{j+1})}{T_2 V_2(R_{j+1})} - \frac{v_{1,j+2} \cdot V_2(L_{j+1})}{v_{2,j+2} \cdot V_1(L_{j+1})} \\
   &> \frac{T_1 V_1(R_{j+1})}{T_2 V_2(R_{j+1})} - \frac{v_{1,j} \cdot V_2(L_{j+1})}{v_{2,j} \cdot V_1(L_{j+1})} \\
   &\ge 0
\end{align*}

Remember we assume Statement 1 does not hold for $j+1$, thus at most one of $\C{1}{T_1}{j+1} > 0$ and $\C{2}{T_1}{j+1} > 0$ is true. And we have shown $\C{2}{T_1}{j+1} > 0$, so it must be the case that $\C{2}{T_1}{j+1} > 0 $ and $\C{1}{T_1}{j+1} \le 0$, i.e., Statement 2 holds for $j+1$. 

\end{proof}

Finally, we are ready to finish the proof of Theorem~\ref{thm:uniform_OAIC}. By Lemma~\ref{lemma:induction_OAIC}, we know that one of Statement 1 and 2 must be true for $j=\kmax(T_1, T_2)$. By Lemma~\ref{lemma:kmax}, Statement 2 is false for $j=\kmax(T_1, T_2)$, so Statement 1 must be true. Thus, we have proved that there exists an auto-bidder equilibrium with allocation $N_{k}$ with reported targets $(T_1, T_2)$, such that $k \ge k'$.

\end{proof}

\section{Discussion and future work}
\label{sec:discussion}
In this paper, we refine the AIC definition in \cite{alimohammadi2023incentive} by proposing two auto-bidding incentive compatibility concepts: RAIC and OAIC. We then analyze the incentive compatibility for advertisers with tCPA constraints in SPA. For both RAIC and OAIC, we can derive ordinal preferences regarding the reporting of different constraints. Based on these preferences, we define auction mechanism incentive compatibility in terms of advertisers consistently preferring their true constraints. For the uniform bidding with two advertisers setting, we establish tools by establishing an equivalent condition for the existence of an equilibrium allocation where one bidder wins a specific number of queries. The generalization of our analysis to more than two bidders under uniform bidding is left for future research, as it presents complexities beyond a straightforward extension of the two-advertiser case. However, we anticipate that the techniques presented in this paper will serve as valuable tools for future work in this area.

Another closely related direction is to study advertisers with budget constraints but no tCPA constraints. The undominated bids assumption in the tCPA setting does not hold for SPA with budget constraints. This is because a bidder could bid less than their value on queries to avoid violating their budget constraints. Without any bid assumptions, the optimal equilibrium for an advertiser is trivially to win all queries by bidding infinitely, while all others bid zero. Conversely, the worst-case equilibrium is winning nothing. Notably, these outcomes are independent of the advertisers' stated budgets, making the budget-constrained scenario less compelling in SPA. RAIC and OAIC for auctions beyond SPA, such as non-truthful auctions and randomized auctions are also interesting problems to explore. 
\section*{ACKNOWLEDGMENTS}
We thank Yeganeh Alimohammad, Andres Perlroth, and Aranyak Mehta for insightful discussions on advertiser incentive compatibility.

\bibliographystyle{plainnat}
\bibliography{references}

\end{document}